\def\fullversion
\newcommand{\ifconference}[1]{{{\ifx\fullversion\undefined{#1}\fi}\xspace}}
\newcommand{\iffullversion}[1]{{{\ifx\conference\undefined{#1}\fi}\xspace}}
\newcommand{\hide}[1]{} 
\newcommand{\yan}[1]{{\textcolor{violet}{Yan: #1}}}
\newcommand{\yihan}[1]{{\color{red}{\bf Yihan:} #1}}
\newcommand{\xiaojun}[1]{{\color{cyan}{\bf Xiaojun:} #1}}
\newcommand{\andy}[1]{{\textcolor{purple}{andy: #1}}}
\newcommand{\revise}[1]{{#1}}
\newcommand{\yan}[1]{}
\newcommand{\yihan}[1]{}
\newcommand{\xiaojun}[1]{}
\newcommand{\andy}[1]{}
\newcommand{\revise}[1]{{#1}}
\newcommand{\mtextsc}[1]{{\mbox{\sc{#1}}}} 
\newcommand{\defn}[1]{\emph{\textbf{#1}}} 
\newcommand{\emp}[1]{\emph{\boldmath\textbf{#1}\unboldmath}} 
\newcommand{\nosemic}{\renewcommand{\@endalgocfline}{\relax}}
\newcommand{\dosemic}{\renewcommand{\@endalgocfline}{\algocf@endline}}
\crefname{section}{Sec.}{Sec.}
\crefname{theorem}{Thm.}{Thm.}
\crefname{lemma}{Lem.}{Lem.}
\crefname{corollary}{Col.}{Col.}
\crefname{table}{Tab.}{Tab.}
\crefname{algorithm}{Alg.}{Alg.}
\crefname{figure}{Fig.}{Fig.}
\crefname{fact}{Fact}{Fact}
\Crefname{table}{Tab.}{Tab.}
\crefname{problem}{Problem}{Problem}
\newtheorem{theorem}{Theorem}[section]
\newtheorem{fact}[theorem]{Fact}
\let \originalleft \left
\let\originalright\right
\renewcommand{\left}{\mathopen{}\mathclose\bgroup\originalleft}
\renewcommand{\right}{\aftergroup\egroup\originalright}
\newtheoremstyle{exampstyle}
{.5em} 
{1em} 
{\it} 
{.5em} 
{\it \bfseries} 
{.} 
{.5em} 
{} 
\theoremstyle{exampstyle} 
\theoremstyle{exampstyle} 
\theoremstyle{exampstyle} 
\theoremstyle{exampstyle} 
\renewenvironment{proof}[1][\proofname]{\par
\vspace{-2\topsep}
\pushQED{\qed}%
\normalfont
\topsep0pt \partopsep0pt 
\trivlist
\item[\hskip\labelsep
      \itshape
  #1\@addpunct{.}]\ignorespaces
}{%
\popQED\endtrivlist\@endpefalse
}
\newcommand{\R}{\mathbb{R}}
\newcommand{\whp}[1]{\emph{whp}}
\newcommand{\true}{\emph{true}}
\newcommand{\false}{\emph{false}}
\newcommand{\modelop}[1]{\texttt{#1}}
\newcommand{\forkins}{\modelop{fork}}
\newcommand{\thread}{thread}
\newcommand{\insformat}[1]{\texttt{#1}}
\newcommand{\WriteMin}{\insformat{write\_min}\xspace}
\newcommand{\writemin}{\WriteMin}
\setlist{topsep=0.3em,itemsep=0.2em,parsep=0.1em,leftmargin=*}
\newcolumntype{L}[1]{>{\raggedright\let\newline\\\arraybackslash\hspace{0pt}}m{#1}}
\newcolumntype{C}[1]{>{\centering\let\newline\\\arraybackslash\hspace{0pt}}m{#1}}
\newcolumntype{R}[1]{>{\raggedleft\let\newline\\\arraybackslash\hspace{0pt}}m{#1}}
\newcolumntype{B}{>{\bf}c}
\newcommand{\myparagraph}[1]{\vspace{.1em}\noindent\emp{#1}\enspace}
\definecolor{framelinecolor}{RGB}{68,114,196}
\newdimen\zzsize
\newdimen\kwsize
\newcommand{\basicstyle}{\fontsize{\zzsize}{1\zzsize}\ttfamily}
\newcommand{\keywordstyle}{\fontsize{\kwsize}{1\kwsize}\ttfamily\bf}
\newdimen\zzlstwidth
\newcommand{\lazyinsert}{\mtextsc{Add}\xspace}
\newcommand{\extract}{\mtextsc{Extract}\xspace}
\newcommand{\extcond}{\mtextsc{GetDist}\xspace}
\newcommand{\ourppsp}{our PPSP} 
\newcommand{\bds}{BiDS}
\newcommand{\bids}{\bds}
\newcommand{\bdastar}{BiD-A$^*$}
\newcommand{\ourlib}{Orionet} 
\newcommand{\querygraph}{query graph} 
\newcommand{\astar}{A$^*$}
\newcommand{\gq}{G_q}
\newcommand{\vq}{V_q}
\newcommand{\eq}{E_q}
\newcommand{\nq}{N_q}
\newcommand{\mumax}{\mu_{\max}}
\newcommand{\vcopy}[2]{#1^{\langle #2 \rangle}}
\newcommand{\naive}{na\"ive} 
\newcommand{\initial}{\mtextsc{Init}}
\newcommand{\prune}{\mtextsc{Prune}}
\newcommand{\update}{\mtextsc{UpdateDistance}}
\newcommand{\tbf}[1]{{\boldmath \textbf{#1} \unboldmath}}
\newcommand{\knn}{$k$-NN}
\begin{document}

\makeatletter
\gdef\@copyrightpermission{
  \begin{minipage}{0.2\columnwidth}
   \href{https://creativecommons.org/licenses/by/4.0/}{\includegraphics[width=0.90\textwidth]{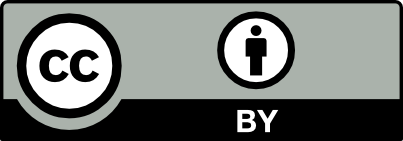}}
  \end{minipage}\hfill
  \begin{minipage}{0.8\columnwidth}
   \href{https://creativecommons.org/licenses/by/4.0/}{This work is licensed under a Creative Commons Attribution International 4.0 License.}
  \end{minipage}
  \vspace{5pt}
}
\makeatother

\title{Parallel Point-to-Point Shortest Paths and Batch Queries}
\settopmatter{authorsperrow=4}

\author{Xiaojun Dong}
\affiliation{%
  \institution{UC Riverside}
  \city{Riverside}
  \state{CA}
  \country{USA}
}
\email{xdong038@ucr.edu}

\author{Andy Li}
\affiliation{%
  \institution{UC Riverside}
  \city{Riverside}
  \state{CA}
  \country{USA}
}
\email{ali164@ucr.edu}

\author{Yan Gu}
\affiliation{%
  \institution{UC Riverside}
  \city{Riverside}
  \state{CA}
  \country{USA}
}
\email{ygu@cs.ucr.edu}

\author{Yihan Sun}
\affiliation{%
  \institution{UC Riverside}
  \city{Riverside}
  \state{CA}
  \country{USA}
}
\email{yihans@cs.ucr.edu}

\begin{abstract}
We propose Orionet, efficient parallel implementations of Point-to-Point Shortest Paths (PPSP) queries using bidirectional search (BiDS) and other heuristics, with an additional focus on batch PPSP queries.
We present a framework for parallel PPSP built on existing single-source shortest paths (SSSP) frameworks
by incorporating novel pruning conditions.
As a result, we develop efficient parallel PPSP algorithms based on early termination, bidirectional search, A$^*$ search,
and bidirectional A$^*$---all with simple and efficient implementations.

We extend our idea to batch PPSP queries, which are widely used in real-world scenarios.
We first design a simple and flexible abstraction to represent the batch
so PPSP can leverage the shared information of the batch.
Orionet formalizes the batch as a \emph{query graph} represented by edges between queried sources and targets.
In this way, we directly extended our PPSP framework to batched queries in a simple and efficient way. 

We evaluate Orionet on both single and batch PPSP queries using various graph types and distance percentiles of queried pairs, 
and compare it against two baselines, GraphIt and MBQ. 
Both of them support parallel single PPSP and \astar{} using unidirectional search.
On 14 graphs we tested, on average, our bidirectional search is 2.9$\times$ faster than GraphIt, and 6.8$\times$ faster than MBQ. 
Our bidirectional \astar{} is 4.4$\times$ and 6.2$\times$ faster than the \astar{} in GraphIt and MBQ, respectively.
For batched PPSP queries, we also provide in-depth experimental evaluation,
and show that Orionet provides strong performance compared to the plain solutions.  

\hide{
Compared to full SSSP, our BiDS-based single PPSP achieves 1.21-101$\times$ speedups
for sources and targets at close or moderate distance.
It can be particularly efficient (more than 100$\times$ faster) when the queried pair is nearby.
Orionet also provides the first parallel solution for general batch PPSP queries.
On several carefully designed benchmarks that cover different query types of batch PPSP,
the new solutions in Orionet achieved 1.92$\times$ average speedup
than a na\"ive SSSP baseline.}
\end{abstract}

\begin{CCSXML}
<ccs2012>
   <concept>
       <concept_id>10003752.10003809.10003635.10010037</concept_id>
       <concept_desc>Theory of computation~Shortest paths</concept_desc>
       <concept_significance>500</concept_significance>
       </concept>
   <concept>
       <concept_id>10003752.10003809.10010170</concept_id>
       <concept_desc>Theory of computation~Parallel algorithms</concept_desc>
       <concept_significance>500</concept_significance>
       </concept>
   <concept>
       <concept_id>10003752.10003809.10010170.10010171</concept_id>
       <concept_desc>Theory of computation~Shared memory algorithms</concept_desc>
       <concept_significance>500</concept_significance>
       </concept>
   <concept>
       <concept_id>10003752.10003809.10003635</concept_id>
       <concept_desc>Theory of computation~Graph algorithms analysis</concept_desc>
       <concept_significance>500</concept_significance>
       </concept>
 </ccs2012>
\end{CCSXML}

\ccsdesc[500]{Theory of computation~Shortest paths}
\ccsdesc[500]{Theory of computation~Parallel algorithms}
\ccsdesc[500]{Theory of computation~Shared memory algorithms}
\ccsdesc[500]{Theory of computation~Graph algorithms analysis}

\keywords{Point-to-Point Shortest Paths, Bidirectional Search, A$^*$, Parallel Algorithms, Graph Algorithms}

\renewcommand\footnotetextcopyrightpermission[1]{} 
\fancyhead{} 






\maketitle

\section{Introduction}\label{sec:intro}

In this paper, we propose \ourlib{}, an efficient parallel library for Point-to-Point Shortest Paths (PPSP) queries using bidirectional search and other heuristics,
with the additional support of \emph{batch PPSP queries}.
PPSP is an important graph processing problem.
Given a graph $G=(V,E)$ and two vertices $s,t\in V$, 
a PPSP query finds the shortest paths from $s$ to $t$.
For simplicity in description, we assume the graphs are undirected in this paper, but all techniques in this paper also apply to directed graphs.
It is one of the most widely used primitives on graphs with applications such as navigation, network design, and artificial intelligence~\cite{luxen2011real,GraphHopper,magnanti1993shortest,mitchell2000geometric,kumawat2021extensive}.

Although in theory, PPSP has the same asymptotic cost as the single-source shortest paths (SSSP) problem (i.e., finding the shortest paths from $s$ to all $v\in V$), 
in practice, many existing techniques can accelerate PPSP.
One of the most widely used techniques is the \defn{bidirectional search} (\bds{}).
At a high level, a \bds{} runs SSSP (e.g., Dijkstra's algorithm) from both $s$ and $t$, with a
stop (pruning) condition, such that a PPSP query does not need to traverse the entire graph. 
Hence, the cost of \bds{} can be substantially lower than a complete SSSP query from $s$ (see \cref{fig:ppsp} as an illustration).
On graphs with geometric information, other heuristics, such as \astar{}, can also be used, either independently or together with \bds{}.

Sequentially, \bds{} has been widely studied and shown to be effective in accelerating PPSP queries.
However, despite related work on parallel PPSP~\cite{barley2018gbfhs,geisberger2008contraction,dibbelt2016customizable,xu2019pnp,zhang2020optimizing}, we are unaware of any work \emph{parallelizing the \bds{}}.
One major reason for this gap is the inherent distinction in the approach of parallel SSSP compared to its sequential counterpart.
Sequentially, the efficiency of \bds{} comes from the fact that SSSP queries from both ends visit vertices in \emph{increasing distance order (i.e., Dijkstra's ordering)}.
The standard stop condition for \bds{} is when the first vertex $v$ is settled (finalized) from both directions.
Hence, \bds{} saves much work by skipping vertices, such as those that are further than $v$ from $s$ (and similarly for the backward search). 
However, to achieve high parallelism,
practical parallel SSSP algorithms usually do not visit vertices in ascending order of distance (consider parallel Bellman-Ford).
Therefore, it remains unclear \emph{whether and how the bidirectional search, shown effective for sequential PPSP queries,
can be parallelized \textbf{correctly} and \textbf{efficiently}.} 

Designing parallel \astar{} is relatively straightforward.
\astar{} incorporates a heuristic function from the current vertex to the target $t$, adding it to the tentative distance from $s$ to prune the search in the ``wrong'' direction.
Many existing parallel SSSP algorithms can be adapted for \astar{}~\cite{zhang2020optimizing,dong2021efficient,zhang2024multi}.
Interestingly, \bds{} and \astar{} can be combined as bidirectional \astar{} (\bdastar{}).
However, all existing sequential \bdastar{} algorithms~\cite{goldberg2005computing,ikeda1994fast} rely on strictly increasing distance orders, similar to \bds{}.
Thus, efficiently parallelizing \bdastar{} also remains an open challenge.

The first contribution of \ourlib{} is \emp{a framework of algorithms and implementations for efficient parallel PPSP, using techniques including bidirectional search, \astar{} search, and bidirectional \astar{} search.}
To achieve high performance and low coding effort,
we adopt a parallel SSSP framework called \emph{stepping algorithms}~\cite{dong2021efficient}.
We carefully redesigned the framework for PPSP queries, which abstract three user-defined functions for different approaches:
1) the setup of the initial frontier (vertices to be processed in the first round),
2) the prune condition to skip certain vertices, and 
3) the process to update the current best answer.
Interestingly, this abstraction also extends to batch PPSP queries, which we describe in detail below.
Our key algorithmic insight mainly lies in the new prune conditions, which correctly and effectively reduce the searching space,
and are also simple for implementation in parallel. 
We provide the technical details in \cref{sec:algo}.
Experimentally, we show that our new approaches in \ourlib{} based on \bds{}, \astar{}, and \bdastar{} greatly outperform existing baselines.

\begin{figure*}
  \centering
  \includegraphics[width=\textwidth]{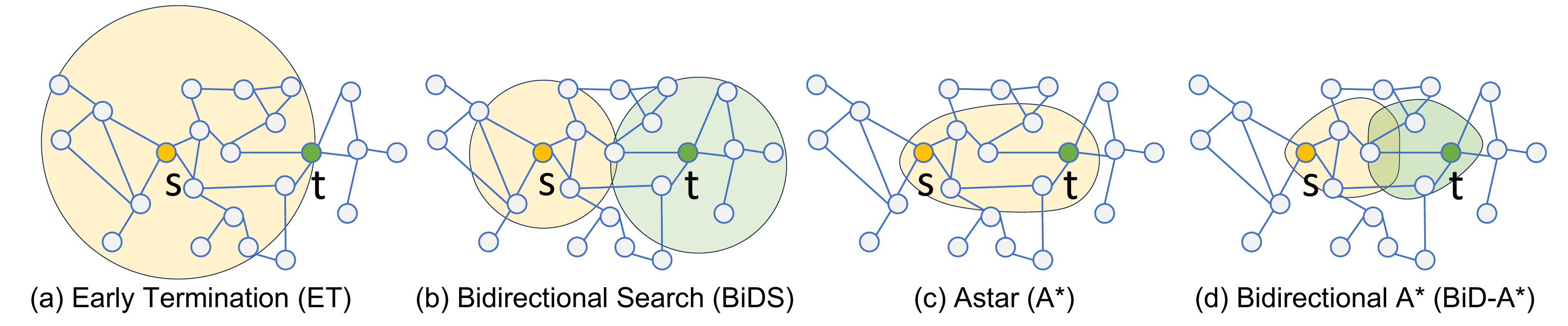}
  \caption{\textbf{Illustration for PPSP algorithms with early termination, bidirectional search, \astar, and bidirectional \astar{}}. 
  Details can be found in \cref{sec:algo}. (a) Early Termination (ET) runs SSSP from $s$ until $t$ is settled. All vertices with distances smaller than $t$ are processed. (b) Bidirectional Search (\bids{}) runs SSSP from both $s$ and $t$, until both sides settle a common vertex. 
  Each search only touches a small range of vertices around $s$ and $t$. (c) Astar (\astar{}) runs SSSP from $s$ with heuristic information. 
  The search is roughly ``guided'' towards $t$. (d) Bidirectional \astar{} (\bdastar) combines the advantages of both \bids{} and \astar{}. It runs SSSP from both $s$ and $t$
  and avoids searching vertices from $s$ that are away from $t$, and similarly for $t$. 
  }\label{fig:ppsp}
\end{figure*}

In addition to single PPSP queries, \ourlib{} also provides a simple and flexible interface to support \emp{batch PPSP queries}.
In many real-world scenarios,
multiple PPSP queries, either from independent users 
or as components in the same complex query (see examples below),
are required to be evaluated in a short time.
Batch PPSP has recently gained much attention and has been studied in many recent papers~\cite{knopp2007computing,geisberger2010engineering,geisberger2008contraction,xu2020simgq,xu2022simgq+,mazloumi2020bead,mazloumi2019multilyra,then2014more}.
Most of them are system work and study techniques such as caching or relabeling to take advantage of multiple SSSP searches.
However, we observe a gap in the study of batch PPSP queries:
this concept is used to refer to many different but closely related query types.
Here, we list several ``batch PPSP'' query types
with possible real-world applications as motivating examples below.

\begin{itemize}
  \item \emp{Single-source Many-target (SSMT)} and symmetrically Many-Source Single-Target: find the shortest paths from one source~$s$ to a set of targets $T\subseteq V$. Example: Finding the shortest path from the current location to any nearby Walmart.  
  \item \emp{Pairwise}: find shortest paths from all sources in $S\subseteq V$ to all targets in $T\subseteq V$. Example: Finding the shortest paths from all Walmart and all their warehouses in a region.
  \item \emp{Multi-Stop}: find shortest paths from $s=t_0$ visiting a list of targets $t_1,t_2,\dots,t_k$ in turn, i.e., a batch of $\{(t_0,t_1), (t_1,t_2),\dots (t_{k-1},t_k)\}$.
  Example: planning a trip with multiple stops on the way. 
  \item \emp{Subset APSP} (All-Pairs Shortest Paths): find pairwise shortest paths among a subset of vertices $V'\subset V$. Many algorithms use subset APSP as a building block, for instance, the ideas of hopsets~\cite{elkin2019hopsets,cao2020efficient} and landmarks~\cite{qiao2012approximate,tretyakov2011fast} for computing approximate shortest paths.
  \item \emp{Arbitrary Batch}: find shortest paths for a list of $s$-$t$ pairs, which can be disjoint or overlapping. 
\end{itemize}

Most existing work focuses on one or a subset of the special cases and lacks a general interface to 
leverage the shared information from arbitrary query batches.
Among the work we know of, some focus on pairwise queries~\cite{knopp2007computing,geisberger2010engineering,geisberger2008contraction}, and some others focus on SSMT queries~\cite{xu2022simgq+}.

One may also notice that due to the specificity of each query, the best strategy for them also remains elusive.
For instance, the SSMT query with a small number of targets may still benefit from running \bds{} from all
vertices, but when the target set $T$ becomes larger, one SSSP query from the source may give the best performance.
A more interesting example is the multi-stop queries, where
the best SSSP-based strategy may be running SSSP from \emph{every other vertex} $t_{2i+1}$
(when the graph is undirected) and combining the results of all $(t_{2i+1}, t_{2i}),(t_{2i+1},t_{2i+2})$ pairs.


\emp{The second contribution of \ourlib{} is a simple abstraction of arbitrary batch PPSP queries,
along with several efficient implementations.}
\ourlib{} formalizes all PPSP queries as a \emp{\querygraph{}} $\gq=(\vq,\eq)$, where $\vq$ contains all vertices in the query batch,
and each edge $(q_i,q_j)\in \eq$ indicates a PPSP query between $q_i$ and $q_j$.
\ourlib{} then analyzes the \querygraph{} and attempts to reuse the shortest path information of all the queried vertices.
Using the \querygraph, we develop algorithms for batch PPSP queries based on both \bds{} and SSSP.
For the \bds{}-based algorithm, we extend our PPSP framework to support batch queries by carefully redefining
the three functions mentioned above.
Specifically, this requires carefully setting the pruning conditions based on all queries in the batch.
As mentioned, when a vertex in $\gq$ is involved in many $s$-$t$ pairs,
using an SSSP from this vertex can likely give better performance.
A \naive{} approach is to run SSSP from all sources $s_i$
and save some work in the case that there are many targets associated with it.
By abstracting the \querygraph{}, a more interesting observation is that
the best SSSP-based strategy is to find the \emp{vertex cover}
of $\gq$ and run SSSP from the cover.
For example, for the multi-stop query, the \querygraph{} forms a chain, and the vertex cover consists of every other vertex on it.
This SSSP-based approach is integrated into \ourlib{}, and for certain special queries, 
it can outperform \bds{}.



\hide{
We tested \ourlib{} on both single and batch PPSP queries on 14 graphs with various types.
Our \bds{} algorithm consistently outperforms regular SSSP, and the gap is large for queries with close or moderately close vertex pairs.
For graphs with geometric information,
using \astar{} and \bdastar{} can further improve the performance on most graphs.
Compared to SSSP, \astar{} and \bdastar{} provide 6.11$\times$ and 7.71$\times$ speedups on all graphs on average, respectively.
Finally, we tested eight typical \querygraph{s} and compared the performance among different approaches.
Our new algorithm based on multi-directional \bds{} achieves the best performance on most of the graphs,
except in specific cases (e.g., SSMT with a star-like \querygraph{}),
where SSSP with vertex cover gives better performance. We release our anonymous code~\cite{ppspcode}.
We believe this is the first systematic study of \bds{}, \astar{}, and \bdastar{} for PPSP, and batch PPSP queries in the parallel setting.
}

For all our algorithms, we highlight their \emp{simplicity, generality}, as well as \emp{efficiency}. 
By modeling related queries into the framework in \cref{alg:ppsp}, each approach can be concisely described by
the instantiation with the three user-defined functions. 
The simplicity also enables high performance with relatively low coding effort, 
as most optimizations for parallel SSSP can be directly used. 

We evaluated \ourlib{} on both single and batch PPSP queries using 14 graphs with different types.  
For single PPSP queries, we tested \ourlib{} using various distance percentile (1\%, 50\%, and 99\% closest) query pairs. 
and compared it against two baselines, GraphIt (CGO'20)~\cite{zhang2020optimizing} and MBQ (SPAA'24)~\cite{zhang2024multi}. 
Both of them only support parallel single PPSP and \astar{} using unidirectional search.
In \cref{sec:exp}, we show our algorithm with bidirectional search achieves significant speedup over the baselines.
On average, across all tests, our \bds{} is 2.9$\times$ faster than GraphIt, and 6.8$\times$ faster than MBQ (even excluding MBQ's timeout cases). 
Our bidirectional \astar{} is 4.4$\times$ and 6.2$\times$ faster than the \astar{} in GraphIt and MBQ, respectively.
These results demonstrate the effectiveness of \ourlib{}, as well as the proposed algorithmic insights in this paper. 
For batched queries, we provide in-depth experimental evaluation to compare the two solutions in \ourlib{} (\bids{}-based and SSSP-based) on various query graph patterns. 
Our code is publicly available~\cite{ppspcode}.
\ifconference{We present more results and analyses in the full version of this paper~\cite{dong2025parallelfull}.}
We believe this is the first systematic study of \bds{}, \astar{}, and \bdastar{} for PPSP, and batch PPSP queries in the parallel setting.

\hide{

which is 1.97$\times$ faster than the \emph{best} among GraphIt and MBQ on average for regular PPSP queries (without \astar{}). 
Our unidirectional \astar{} is 2.45$\times$ faster than GraphIt on average, and 4.84$\times$ faster than MBQ.
Adding bidirectional search further improves our \astar{} by 1.2$\times$. 
}

\section{Preliminaries}\label{sec:prelim}

\myparagraph{Computational Models.}
We use the standard binary fork-join model~\cite{CLRS,blelloch2020optimal}.
We assume a set of \thread{}s that share a common memory.
A process can \forkins{} two child software \thread{s} to work in parallel.
When both children complete, the parent process continues.
We use the atomic operation \WriteMin{}$(p,\mathit{v})$, which reads the memory location pointed to by $p$ and writes value $v$ to it if $v$ is smaller than the current value.
It returns $\true{}$ if the update is successful and $\false{}$ otherwise.
\hide{The \defn{work} of an algorithm is the total number of instructions and
the \defn{span} (depth) is the length of the longest sequence of dependent instructions in the computation.}
\hide{We can execute the computation using a randomized work-stealing
scheduler in practice in $W/P+O(D)$ time \whp{} on $P$ processors~\cite{BL98,ABP01,gu2022analysis}.}

\begin{table}[t]
  \small
  \centering
\hide{
\begin{tabular}{ll|clll}
  \toprule
  $G=(V,E)$ & : input graph & $\delta[v]$ & \multicolumn{3}{l}{: tentative distance of $v$} \\
  $N(v)$ & : neighbor set of $v$ & $d(u,v)$ & \multicolumn{3}{l}{: true distance from $u$ to $v$} \\
  $u$-$v$ & : path from $u$ to $v$ & $h(v)$ & \multicolumn{3}{l}{: heuristic of $v$} \\
  \end{tabular}%
}

\begin{tabular}{clclll}
  \toprule
  \multicolumn{6}{l}{\bf General Notations:} \\
  $G$ & : input graph $G=(V,E)$ & $\delta[v]$ & \multicolumn{3}{l}{: tentative distance of $v$} \\
  $N(v)$ & : neighbor set of $v$ & $d(u,v)$ & \multicolumn{3}{l}{: true distance from $u$ to $v$} \\
  $u$-$v$ & : path from $u$ to $v$ & $h(v)$ & \multicolumn{3}{l}{: heuristic of $v$} \\
  \end{tabular}%

\begin{tabular}{@{}rl|cl}
  \midrule
  \multicolumn{2}{l|}{\bf For single PPSP:} & \multicolumn{2}{l}{\bf For batch PPSP:} \\
  \multicolumn{1}{@{}c}{$s$} & : source & $\gq$ & : query graph $=\gq(\vq,\eq)$ \\
  \multicolumn{1}{@{}c}{$t$} & : target & $\nq(v)$ & : neighbor set of $v$ in $\gq$ \\
  \multicolumn{1}{@{}c}{$\vcopy{v}{+}$} & : $v$ from the source & $q_i$ & : the $i$-th source in $\vq$ \\
  \multicolumn{1}{@{}c}{$\vcopy{v}{-}$} & : $v$ from the target & $\vcopy{v}{i}$ & : $v$ from $q_i$ \\
  \multicolumn{1}{@{}c}{$\mu$} & : tentative distance & $\mu[\langle q_i,q_j \rangle]$ & : tentative distance from $q_i$ to $q_j$ \\
  & \ \ from $s$ to $t$ & $\mumax[i]$ & : search radius from source $q_i$ \\
  \bottomrule
\end{tabular}
\caption{\textbf{Notations}.}\label{tab:notation}

\hide{
\begin{tabular}{@{}c@{ }l@{ }|c@{ }l|c@{ }l}
  \toprule
  \multicolumn{2}{@{  }l|}{\bf General Notations:}&\multicolumn{2}{@{  }l|}{\bf For Single PPSP:} & \multicolumn{2}{@{  }l}{\bf For Batch PPSP:} \\
  $G$ & : input graph $G=(V,E)$  &\multicolumn{1}{@{ }c@{}}{$s$} & : source & $\gq$ & : query graph $\gq=(\vq,\eq)$\\
  $N(v)$ & : neighbor set of $v$ &  \multicolumn{1}{@{ }c@{}}{$t$} & : target & $\nq(v)$ & : neighbor set of $v$ in $\gq$ \\
  $u$-$v$ & : path from $u$ to $v$  &\multicolumn{1}{@{ }c@{}}{$\vcopy{v}{+}$} & : $v$ from the source & $q_i$ & : the $i$-th source in $\vq$ \\
  $\delta[v]$ & {: tentative distance of $v$} & \multicolumn{1}{@{ }c@{}}{$\vcopy{v}{-}$} & : $v$ from the target & $\vcopy{v}{i}$ & : $v$ from $q_i$ \\
   $d(u,v)$ & {: true distance from $u$ to $v$}& \multicolumn{1}{@{ }c@{}}{$\mu$} & : tentative distance & $\mu[\langle q_i,q_j \rangle]$ & : tentative distance from $q_i$ to $q_j$ \\
  $h(v)$ & {: heuristic of $v$}&\multicolumn{1}{@{ }c@{}}{~} & \ \ from $s$ to $t$ & $\mumax[i]$ &: search radius from source $q_i$ \\
  \bottomrule
\end{tabular}
  \caption{\textbf{Notations}.}\label{tab:notation}
}
\end{table}

\myparagraph{Notations.}
We consider a weighted graph $G=(V,E,w)$ with $n=|V|$, $m=|E|$, and an edge weight function $w:E\to \R^+$.
For $v \in V$, we define $N(v)=\{u\,|\,(v,u)\in E\}$ as the \defn{neighbor set} of $v$.
For PPSP queries, we use $s$ as the source and $t$ as the target. 
In our algorithms, we always maintain $\delta[v]$ as the \emp{tentative distance} from the source to vertex $v$,
which will be updated in the algorithm by the relaxations on $v$.
We also use $\delta[\vcopy{v}{\oplus}]$ to denote the distance for the same vertex $v$ from different sources.
For example, in bidirectional search, we use $\delta[v^{\langle+\rangle}]$ to denote the tentative distance from the source and $\delta[v^{\langle-\rangle}]$ from the target. 
We also denote $d(s,v)$ as the \defn{true distance} from $s$ to $v$.
With clear context, we simplify it to $d(v)$.
We say a vertex is \defn{settled} if its true shortest distance has been computed in $\delta[v]$. 
A list of notations is summarized in \cref{tab:notation}.

\begin{algorithm}[t]
  \small
  \caption{The Stepping Algorithms Framework\label{alg:stepping}}
  \SetKwFor{parForEach}{ParallelForEach}{do}{endfor}
  \KwIn{A graph $G=(V,E,w)$ and a source node $s$.}
  \KwOut{The shortest path distances $\delta[\cdot]$ from $s$.}
  \DontPrintSemicolon
  $\delta[\cdot]\leftarrow +\infty$\\
  $\delta[s]\leftarrow 0$, $F.\lazyinsert{}(s)$\tcp*[f]{Initialize and add $s$ to the frontier $F$}\\
  \While{$|F|>0$\label{line:step:while-loop}} {
    $\theta \gets \extcond(F)$ \label{line:step:threshold}\tcp*[f]{Get the distance threshold based on the frontier}\\
    \tcp{process all $u$ with tentative distances no larger than $\theta$}
    \parForEach{$u\in F.\extract(\theta)$\label{line:step:extract}} {
        \parForEach{$v\in N(u)$\label{line:step:innerloop2}} {
        \If {$\WriteMin(\delta[v],\delta[u]+w(u,v))$ \label{line:step:writemin}} {
          $F.\lazyinsert{}(v)$\label{line:step:update} \tcp*[f]{Add $v$ to the frontier $F$ if $v\notin F$ previously}
        }
      }
    }
  }
  \Return {$\delta[\cdot]$}\\[.15em]
\end{algorithm} 

\myparagraph{Stepping Algorithm Framework.}
The stepping algorithm framework proposed in~\cite{dong2021efficient},
is a high-level abstraction of many existing parallel SSSP algorithms,
including Dijkstra~\cite{dijkstra1959,brodal1998parallel}, Bellman-Ford~\cite{bellman1958routing,ford1956network}, $\Delta$-Stepping~\cite{meyer2003delta}, $\rho$-Stepping~\cite{dong2021efficient}, and more~\cite{blelloch2016parallel,Shi1999}.
The framework (see \cref{alg:stepping}) runs in steps and maintains a \defn{frontier} $F$ as the set of vertices to be processed.
In each step, it extracts vertices (\cref{line:step:extract}) that have tentative distances within a certain \textit{threshold} $\theta$ (\cref{line:step:threshold})  
from the frontier and relaxes their neighbors in parallel (\cref{line:step:innerloop2}).
If a vertex is successfully relaxed (\cref{line:step:writemin}),
it will be added to the next frontier (\cref{line:step:update}).
This framework applies to various algorithms by plugging in different functions for \extcond{}, 
which decides the threshold $\theta$ of the current step.
The algorithmic ideas in this paper can be combined with many SSSP algorithms in the stepping framework. 
In \ourlib{}, we use $\Delta^*$-stepping~\cite{meyer2003delta,dong2021efficient} as it is one of the most commonly used parallel SSSP algorithms.
In $\Delta^*$-stepping, the $i$-th call of $\extcond()$ returns $i\cdot \Delta$, where $\Delta$ is a user-defined parameter.

\myparagraph{Other Shortest-Path Algorithms.}
As one of the most studied problems, there have been many insightful algorithms designed for computing shortest paths.
We will review them in \cref{sec:related}.

\hide{
For example, Dijkstra~\cite{dijkstra1959} returns the minimum distance among the vertices in the frontier.
However, it limits parallelism because only few vertices have the minimum distance in the general case.
In contrast, $\rho$-stepping~\cite{dong2021efficient} returns the $\rho$-closest distances in the frontier to allow better parallelism,
where $\rho$ is a parameter decided by the hardware (i.e., the granularity to saturate all processors with sufficient tasks in each step).
Unlike other parallel SSSP algorithms, $\rho$-stepping is relatively insensitive to the parameter $\rho$ within a reasonably large range.
Therefore, we use $\rho$-stepping in the following sections to present our algorithm
to ease the effort to tune the parameters on different types of graphs with different weight ranges.
}

\hide{We note that being insensitive to the algorithm parameter is especially an advantage for PPSP algorithms,
as the behavior of the algorithm can be highly impacted by how far the queried vertices are from each other.
For other algorithms, such as $\Delta$-stepping, the best parameter $\Delta$ may vary significantly
for close-by or far-away vertex pairs.
} 
\section{Algorithms for Point-to-Point Shortest Paths}\label{sec:algo}

In this section, we present our point-to-point shortest path (PPSP) framework
and implementations of early termination (ET), \astar{}, \bds{}, and \bdastar{} based on this framework.
All these techniques have been shown to be effective for sequential PPSP.
Although early termination and \astar{} have also been studied in parallel settings,
we are aware of little work on using their bidirectional counterparts to achieve high-performance PPSP.
A key challenge we observed is that the efficiency of sequential \bds{} relies on strong ordering in vertex visits (Dijkstra’s order),
which is not preserved in parallel SSSP algorithms.
Parallel SSSP algorithms usually update vertices in batches or asynchronously, where strong ordering is not enforced.
Hence, in parallel, non-trivial adaptations are required to apply the ideas of \bds{}, \astar{}, and \bdastar{} effectively to improve performance.


To overcome these challenges,
we first design a PPSP framework adapted from the stepping algorithms framework in~\cite{dong2021efficient}.
Our framework eases the coding effort by abstracting the pruning conditions
and, more importantly, facilitates our algorithm on batch PPSP queries.
To incorporate bidirectional search, we design a pruning criterion independent of Dijkstra's order and present its adaptation for \bdastar{}.
Both our \bids{} and \bdastar{} fit seamlessly into our framework and significantly improve the performance.

Note that both bidirectional search and \astar{} are considered heuristics, which do not improve the worst-case bounds.
Our goal in this paper is to show that all these techniques can be the same effective in the parallel setting as in the classic sequential case.
The work and span bounds for \ourlib{} are upper-bounded by the underlying parallel SSSP algorithm~\cite{dong2021efficient}.

In the rest of this section, we first present our framework (\cref{sec:algo:framework}),
then discuss early termination (\cref{sec:algo:et}) and \astar{} (\cref{sec:algo:astar}) as examples to show the simplicity of our framework.
We then present our new algorithms on \bids{} (\cref{sec:algo:bds}) and \bdastar{} (\cref{sec:algo:biastar}) and also prove the correctness.

\hide{
Although \bds{} and \astar{} are widely studied and shown to be effective for PPSP in sequential,
we are aware of little work on using them for achieving high performance PPSP in parallel.
One key issue we observed is that the efficiency of sequential \bds{} and \astar{} relies on the strong ordering in visiting the vertices (Dijkstra's order in \bds{} and best-first order in \astar{}), but this is not the case in parallel.
Parallel SSSP algorithms usually update the vertices in a batch or asynchronously, so strong ordering is not enforced.
Hence, we need non-trivial adaption to apply the ideas of \bds{} and \astar{} in the parallel setting to improve efficiency.

To overcome these challenges, 
we design a PPSP framework adapted from the stepping algorithms framework in~\cite{dong2021efficient}.
Under this framework, our PPSP algorithms (with or without \bds{} and \astar{})
only differ in the ways to 
1) set the initial condition of a stepping algorithm, specified by function $\initial$,
2) determine whether the further search on a vertex can be pruned, specified by function $\prune$, and
3) update the $s$-$t$ path distances (see \cref{tab:condition}) accordingly, specified by the function $\update$.
Our framework eases the coding effort by abstracting the pruning conditions
and, more importantly, facilitates our algorithm on batch PPSP queries.
}

\subsection{Our Framework}\label{sec:algo:framework}
We present our framework in \cref{alg:ppsp}.
Under this framework, variants of our PPSP algorithms 
differ only in how they 
1) find the first frontier of a stepping algorithm, specified by the function $\initial$;
2) determine whether further search on a vertex can be pruned, specified by the function $\prune$; and
3) update the $s$-$t$ path distance accordingly, specified by the function $\update$.

\begin{algorithm}[t]
  \small
  \caption{\small Point-to-Point Shortest Path Algorithm Framework} \label{alg:ppsp}
  \SetKwFor{parForEach}{ParallelForEach}{do}{endfor}
  \KwIn{A graph $G=(V,E,w)$.}
  \KwOut{
  Shortest path between a given pair of vertices or multiple such query pairs.
  }
  \DontPrintSemicolon
  $\mu\gets +\infty, \delta[\cdot]\gets +\infty$ \\
  $\initial()$\label{line:ppsp:init}\tcp*[f]{Initialize the frontier $F$ and the distance $\delta[\cdot]$ for the sources} \\
  \While{$|F|>0$\label{line:while-loop}} {
    $\theta \gets \extcond(F)$ \label{line:ppsp:extract} \\
    \tcp{$\oplus$ denotes the source of $u$}
    \parForEach{$\vcopy{u}{\oplus} \in F.\extract(\theta)$} { 
      \tcp{Prune the search at $\vcopy{u}{\oplus}$ if it cannot contribute to $\mu$}
      \If{$\neg \prune(\vcopy{u}{\oplus})$} {\label{line:ppsp:check}
        \parForEach{$v\in N(u)$}{
          \If {$\WriteMin(\delta[\vcopy{v}{\oplus}],\delta[\vcopy{u}{\oplus}]+w(u,v))$\label{line:ppsp:relax}} {
            \tcp{Update $\mu$ using the path through $v$}
            $\mu = \update(\vcopy{v}{\oplus})$\label{line:ppsp:update}\\
            \lIf{$\neg \prune(\vcopy{v}{\oplus})$\label{line:ppsp:prune}} {
              $F.\lazyinsert{}(\vcopy{v}{\oplus})$\label{line:ppsp:add}
            }
          }
        }
      }
    }
  }
  \Return $\mu$
\end{algorithm}

The algorithm begins by initializing the frontier and setting the source distance to zero using the $\initial$ function (\cref{line:ppsp:init})
and then proceeds in steps, similar to a regular stepping algorithm.
The main difference is that since searches can be from multiple sources (e.g., a bidirectional search or batch queries),
each vertex may appear in the frontier multiple times (each from a different source).
We use $\vcopy{u}{\oplus}$ to denote an element in the frontier,
where $u\in V$ is a vertex and $\oplus$ denotes the source of the search. 
For example, in bidirectional searches, a vertex $u$ can appear in both forward and backward searches.
We use $\vcopy{u}{+}$ and $\vcopy{u}{-}$ to denote these two cases, respectively.
As a result, the extracted vertices are those with the closest distances across multiple sources. 
If a vertex $\vcopy{u}{\oplus}$ is extracted, we first determine whether it should be pruned using the function $\prune$ (\cref{line:ppsp:check}).
If not, we process its neighbors regularly.
Note that whenever $\prune(v)$ is called, $v$ satisfies $\delta[v]\neq\infty$ since $v$ is either the source or has been updated by other vertices.
During the algorithm,
we maintain a global variable~$\mu$, which is the current shortest distance from $s$ to $t$.
If a vertex $v\in N(u)$ can be relaxed, we use $\update$ to update the current shortest distance $\mu$
with the path $s$-$v$-$t$, 
ensuring that $\mu$ eventually converges to the exact shortest distance from $s$ to $t$.
If the neighbor $v$ does not satisfy the pruning condition, it is added to the frontier (\cref{line:ppsp:add}).

One may note that updating $\mu$ requires priority updates, but the contention here is light---existing analysis~\cite{shun2013reducing} shows that the content can generally be considered as logarithmic.

In the following, we will present our four PPSP algorithms 
by showing how they instantiate the functions $\initial, \prune$, and $\update$.
We summarize the conditions in  \cref{tab:condition}. 

\newcommand{\Gq}{G_q}
\newcommand{\Vq}{V_q}
\newcommand{\Eq}{E_q}
\newcommand{\qq}{q}

\begin{table*}[htbp]
  \small
  \centering

\begin{tabular}{llll}
      \toprule
            & \multicolumn{1}{l}{\boldmath{}\textbf{\textsc{Init}$()$}\unboldmath{}} & \boldmath{}\textbf{\textsc{Prune}$(v)$}\unboldmath{} & \boldmath{}\textbf{\textsc{UpdateDistance}$(v)$}\unboldmath{} \\
      \midrule
      \midrule
      \multicolumn{4}{@{}l}{\bf \boldmath Single-Directional Searches for PPSP: Early Termination (ET) and \astar \unboldmath } \\ 
      \multicolumn{4}{@{}l}{For source $s\in V$ and target $t\in V$, $\mu$ maintains the current shortest distance between $s$ and $t$. 
      }
      \\ 
      \midrule
      \multicolumn{1}{l}{\textbf{~~ ET}} & $\delta[s]\leftarrow 0$, $F.\lazyinsert{}(s)$ & $\delta[v] \geq \mu$ & \multicolumn{1}{l}{If $v=t$: $\writemin(\mu,\delta[v])$} \\
      \midrule
      \multicolumn{1}{l}{\textbf{~~ \astar}} & $\delta[s]\leftarrow 0$, $F.\lazyinsert{}(s)$ & $\delta[v]+h(v) \geq \mu$ & \multicolumn{1}{l}{If $v=t$: $\writemin(\mu,\delta[v])$} \\
      \midrule
      \midrule
      \multicolumn{4}{@{}l}{\bf \boldmath Bi-Directional Searches for PPSP: Bi-Directional Searches (BiDS) and Bi-Directional \astar{} (\bdastar) \unboldmath}\\ 
      \multicolumn{4}{@{}l}{Each vertex $v\in V$ has two copies: $\vcopy{v}{+}$ denotes the search from the source $s$, and $\vcopy{v}{-}$ denotes the search from the target $t$}\\
      \multicolumn{4}{@{}l}{For source $s\in V$ and target $t\in V$, $\mu$ maintains the current shortest distance between $s$ and $t$. 
      }
      \\ 
      \midrule
      \multirow{2}[2]{*}{\textbf{~~ \bds}} & \multicolumn{1}{p{11em}}{$\delta[\vcopy{s}{+}]\leftarrow 0$, $F.\lazyinsert{}(\vcopy{s}{+})$} & \multirow{2}[2]{*}{$\delta[\vcopy{v}{\cdot}] \geq \mu/2$} & \multirow{2}[2]{*}{$\writemin(\mu, \delta[\vcopy{v}{+}]+\delta[\vcopy{v}{-}])$} \\
            & \multicolumn{1}{p{11em}}{$\delta[\vcopy{t}{-}]\leftarrow 0$, $F.\lazyinsert{}(\vcopy{t}{-})$} &       &  \\
      \midrule
      \multirow{2}[2]{*}{\textbf{~~ \bdastar}} & \multicolumn{1}{p{11em}}{$\delta[\vcopy{s}{+}]\leftarrow 0$, $F.\lazyinsert{}(\vcopy{s}{+})$} & For \textsc{Prune}$(\vcopy{v}{+})$: $\delta[\vcopy{v}{+}]+h_{F}(v) \geq \mu/2$ & \multirow{2}[2]{*}{$\writemin(\mu, \delta[\vcopy{v}{+}]+\delta[\vcopy{v}{-}])$} \\
            & \multicolumn{1}{p{11em}}{$\delta[\vcopy{t}{-}]\leftarrow 0$, $F.\lazyinsert{}(\vcopy{t}{-})$} & For \textsc{Prune}$(\vcopy{v}{-})$: $\delta[\vcopy{v}{-}]+h_{B}(v) \geq \mu/2$ &  \\
      \midrule
      \midrule
      \multicolumn{4}{@{}l}{\bf \boldmath Multi-Directional Searches for \emph{Batch} PPSP (Multi-PPSP)\unboldmath}\\ 
      \multicolumn{4}{@{}l}{Let $\Gq=(\Vq,\Eq)$ be the \querygraph{}. Each vertex $v\in V$ has $|\Vq|$ copies: $\vcopy{v}{i}$ denotes the search to vertex $v$ from the $i$-th vertex $q_i\in \vq$.}\\
      \multicolumn{4}{@{}l}{For $\langle \qq_i,\qq_j \rangle\in \Eq$ (i.e., the distance between $\qq_i$ and $\qq_j$ is queried), 
      $\mu[\langle \qq_i,\qq_j \rangle]$ maintains the current shortest distance between $\qq_i$ and $\qq_j$. 
      }\\\midrule
      \multirow{4}[2]{*}{\textbf{\tbf{~~Multi-PPSP}}} &  & \multirow{4}[2]{*}{For \textsc{Prune}$(\vcopy{v}{i})$: $\delta[\vcopy{v}{i}]\geq \mumax[i]/2$} & For \textsc{Update}$(\vcopy{v}{i})$, let $q_i$ be the $i$-th vertex in $\Vq$: \\
      &\multirow{1}[2]{*}{For $\qq_i \in \Vq$: }&&\quad For each $\qq_j \in \nq(q_i)$:\\
      &\multirow{1}[2]{*}{\quad$\delta[\vcopy{\qq_i}{i}]\leftarrow 0$, $F.\lazyinsert{}(\vcopy{\qq_i}{i})$}&&\qquad$\writemin(\mu[\langle \qq_i,\qq_j \rangle], \delta[\vcopy{v}{i}]+\delta[\vcopy{v}{j}])$\\
            &       &       & \qquad $\writemin(\mumax[i], \max\{\mu[\langle \qq_i,\qq_j \rangle]~|~\qq_j\in \nq(\qq_i)\})$ \\
      \bottomrule
      \end{tabular}%

  \caption{\textbf{\textbf{\textsc{Init}}, \textbf{\textsc{Prune}}, and \textbf{\textsc{Update}} functions of our PPSP algorithms.} \normalfont
  The superscript $i$ of a vertex $v$ indicates $v$ is searched from $i$.
  ``ET'': early termination (\cref{sec:algo:et}).
  ``\astar'': astar search (\cref{sec:algo:astar}).
  ``\bds'': Bidirectional search (\cref{sec:algo:bds}).
  ``\bdastar'': bidirectional astar search (\cref{sec:algo:biastar}).
  $h(v)$ is the heuristic of $v$, e.g., based on the geometric distance from $v$ to $s$ and $t$. 
  ``$h_{F}(v)$'': heuristic used in the forward search.
  ``$h_{B}(v)$'': heuristic used in the backward search.
  Some examples of the heuristic functions are provided in~\cref{sec:algo}. 
  For all single PPSP algorithms, we maintain a global variable $\mu$,
  which is the current best distance from $s$ to $t$. 
  The $\update{}$ function will update this value, which finally converges to the true distance from $s$ to $t$.
  Similarly, in the Multi-PPSP algorithm, for each PPSP query $\langle q_i,q_j \rangle \in \Eq$ in the batch,  
  we use $\mu[\langle \qq_i, \qq_j\rangle]$ to maintain the current shortest distance between $\qq_i$ and $\qq_j$. 
  We also maintain an array of $\mumax[i]=\max\{\mu[\langle \qq_i,\qq_j \rangle]~|~\qq_j\in \nq(\qq_i)\}$, which is the farthest distance that needs to be searched from the $i$-th source. 
  }

  \label{tab:condition}

\end{table*}%

\subsection{Early Termination (ET)}\label{sec:algo:et}
For an algorithm in the stepping algorithm framework, the tentative distances $\delta[v]$ for any vertex~$v$ only monotonically decreases.
Existing work~\cite{xu2019pnp,zhang2020optimizing} takes advantage of this monotonicity to prune the exploration of vertices $v$ with $\delta[v] \geq \mu$ as they do not contribute to the final distance.
This technique is often referred to as \textit{early termination (ET)} and has been used in existing algorithms, including PnP~\cite{xu2019pnp}, GraphIt~\cite{zhang2020optimizing}, and MBQ~\cite{zhang2024multi}.

Integrating ET to \ourlib{} is straightforward. 
The \initial{} function simply initializes the distance for $s$ as zero and adds it to the frontier.
Since searches all have the same source $s$, we omit the $\oplus$ parameter in \cref{tab:condition}. 
Recall that the algorithm maintains $\mu$ as the current best distance from $s$ to $t$.
The $\prune$ function skips a vertex $v$ with $\delta[v] \geq \mu$.
Accordingly, the \update{} function updates $\mu$ when a relaxation results in a smaller distance for $t$.

\hide{
In Dijkstra's, early termination can be easily applied because it visits vertices in ascending order of distances,
which means the algorithm can terminate after $t$ is popped from the priority queue.
However, in the parallel setting, there is usually no strict order of propagation,
which makes early termination complicated.
Existing work has studied various pruning conditions based on what the underlying SSSP algorithms are.
For example, PnP~\cite{xu2019pnp} is modified from parallel Bellman-Ford.
It simply prune vertices $v$ with $\delta[v] \geq \delta[t]$
since it cannot infer what vertices have been settled from Bellman-Ford.
As a $\Delta$-stepping-based algorithm which provides a loose ordering on the vertices being settled,
GraphIt~\cite{zhang2020optimizing} terminates when it enters iteration $i$ and $i\Delta$ is no less than $\delta[t]$.
}

\subsection{Astar (\astar{})}\label{sec:algo:astar}
\astar{} is widely used in pathfinding, informational search, and natural language processing~\cite{bast2016route,zeng2009finding,hart1968formal,nilsson2009quest,klein2003parsing}.
When used in graph PPSP queries, 
\astar{} employs a heuristic function $h(\cdot)$,
where $h(v)$ provides an
estimate (usually a lower bound) of the distance from $v$ to the target $t$.
The estimate can be the geometric distance (e.g., Euclidean distance) from $v$ to $t$.
As exemplified by Dijkstra's algorithm,
when selecting a vertex to search,
it chooses the vertex $v$ with the smallest value of $\delta[v]+h(v)$, until $t$ is settled.
This process is also referred to as best-first search.

The heuristic function usually satisfies two properties:
(1) \defn{Admissibility}:  $\forall v\in V$, $h(v)\leq d(v,t)$.
(2) \defn{Consistency}: $\forall (u,v)\in E$, $h(u)\leq w(u,v)+h(v)$. 
Note that when $h(t)=0$ (true for almost all commonly used heuristics), consistency implies admissibility~\cite{pearl1984heuristics}. 
In this paper, we assume the heuristic function satisfies consistency.
If $h(\cdot)$ is consistent, running any shortest path algorithm 
on $G=(V,E,w)$ with $h(\cdot)$ applied is equivalent to running it on
an induced graph $G'=(V,E,w')$ with modified weights $w'(u,v)=w(u,v)-h(u)+h(v)$.
The induced graph has two useful properties.

\begin{fact}\label{fact:consistency}
  Given a graph $G=(V,E,w)$ and a consistent heuristic $h(\cdot)$, 
  its induced graph $G'=(V,E,w')$ with the modified weight $w'(u,v)=w(u,v)-h(u)+h(v)$ has two properties: 
  1) For all $(u,v)\in E$, $w'(u,v)$ is nonnegative; and
  2) Given a source $s$, $d'(s,t)=d(s,t)-h(s)+h(t)$, where $d(s,t)$ and $d'(s,t)$ are the distances between $s$ and $t$ on $G$ and $G'$, respectively. 
\end{fact}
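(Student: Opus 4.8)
The plan is to prove the two properties independently, since each follows almost immediately from the definition of a consistent heuristic. For property (1), I would simply rewrite the consistency condition: by definition, every edge $(u,v)\in E$ satisfies $h(u)\le w(u,v)+h(v)$, which rearranges to $w(u,v)-h(u)+h(v)\ge 0$, i.e.\ $w'(u,v)\ge 0$. Thus property (1) is nothing more than the consistency inequality restated, and needs no further work.

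For property (2), the key idea is a telescoping argument along paths. Fix any $s$-$t$ path $P=(s=x_0,x_1,\dots,x_k=t)$, and write $w(P)=\sum_{i=0}^{k-1}w(x_i,x_{i+1})$ and $w'(P)=\sum_{i=0}^{k-1}w'(x_i,x_{i+1})$ for its lengths under the original and modified weights. Substituting $w'(x_i,x_{i+1})=w(x_i,x_{i+1})-h(x_i)+h(x_{i+1})$, the heuristic terms form a telescoping sum that collapses to the endpoint values, giving
\[
  w'(P)=w(P)-h(x_0)+h(x_k)=w(P)-h(s)+h(t).
\]
Hence the modified length of \emph{every} $s$-$t$ path differs from its original length by the same additive constant $-h(s)+h(t)$, which depends only on the endpoints and not on the path.

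The final step is to take the minimum over all $s$-$t$ paths. Since every path's length is shifted by the identical constant $-h(s)+h(t)$, the path minimizing $w'(\cdot)$ coincides with the one minimizing $w(\cdot)$, and the optimal values satisfy $d'(s,t)=d(s,t)-h(s)+h(t)$; if no $s$-$t$ path exists, both sides are $+\infty$ and the equality holds trivially. I do not expect any real obstacle here — the argument is elementary once the telescoping is observed. The only point requiring slight care is that the shift constant is endpoint-dependent, so the statement must be read for a fixed source-target pair rather than as a uniform shift of all pairwise distances; since property (2) is already phrased for a fixed $s$ and $t$, this causes no difficulty.
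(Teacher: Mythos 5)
Your proof is correct and follows the same route the paper relies on: property (1) is the consistency inequality rearranged, and property (2) is the standard telescoping of $-h(x_i)+h(x_{i+1})$ along any $s$-$t$ path, which is exactly the computation the paper carries out (in its appendix proof of \cref{thm:bdastar-correctness}, where $\delta'[v_i]=\delta(s,v_i)+h(v_i)-h(s)$ is derived by the same cancellation). Your added remarks---that the shift is uniform over all $s$-$t$ paths so the minimizing path is preserved, and that the disconnected case gives $+\infty=+\infty$---are correct and complete the argument.
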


Thus, the induced graph modifies the shortest distance between $s$ and $t$ by a constant shift of $h(t)-h(s)$ while preserving the relative distances.
This guarantees the feasibility of running Dijkstra (and other algorithms) on $G'$,
with the extra benefit that the search prioritizes expanding vertices closer to the target.
\cref{fact:consistency} also implies that we can replace Dijkstra's algorithm with parallel SSSP algorithms in our framework, which yield efficient parallel \astar{} algorithms.

\hide{
and  as one can use Euclidean distances
as the heuristic function to estimate the lower-bound distance
from a vertex $v$ to the target $t$.
}

When implementing these algorithms in \ourlib{}, the only difference between ET and \astar{} is in the \prune{} function:
a vertex $v$ can be pruned if it satisfies $\delta[v]+h(v) \geq \mu$.
This is because $\delta[v]+h(v)$ provides a lower bound on the shortest path through $v$.
If it is already no less than $\mu$, search from $v$ will not improve the solution. 
Although \astar{} is simple and can be easily integrated into parallel SSSP algorithms,
a previous study~\cite{zhang2020optimizing} reports that \astar{} performs slower than ET.
We improve \astar{} with memoization, which enables it to outperform ET, as discussed in~\cref{sec:imp}.

\subsection{Bidirectional Search (\bds)}\label{sec:algo:bds}
\bds{} for PPSP has been widely studied sequentially~\cite{pohl1969bi,goldberg2006reach,luby1989bidirectional,goldberg2005computing,holte2016bidirectional}
and is shown to have better bounds for certain graph types and better practical performance than ET in general. 
As shown in \cref{fig:ppsp} (b), although \bds{} performs two searches from both $s$ and $t$,
the total search space can be smaller than a standard PPSP 
since both searches cover smaller radii.
The idea of most sequential \bds{} algorithms is to run Dijkstra's algorithm from both sides
due to the following theorem.

\begin{theorem}[\cite{pohl1969bi}]\label{thm:bi-dijkstra}
  Given a nonnegative weighted Graph $G=(V,E,w)$, a source $s$, and a target $t$,
  one can use Dijkstra's algorithm to search from both $s$ and $t$,
  alternate between the two searches in any way, and terminate when a vertex $v$ has been settled by both searches.
\end{theorem}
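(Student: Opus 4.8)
The plan is to reduce \cref{thm:bi-dijkstra} to two elementary facts about the two Dijkstra searches together with a single ``crossing edge'' argument. First I would invoke the standard correctness of Dijkstra's algorithm in each direction: the forward search settles vertices in nondecreasing order of $d(s,\cdot)$ and assigns each settled vertex its true distance $\delta[\vcopy{v}{+}]=d(s,v)$, and symmetrically the backward search settles in nondecreasing order of $d(\cdot,t)$ with $\delta[\vcopy{v}{-}]=d(v,t)$. Crucially, these facts hold regardless of how the two searches are interleaved, since each search in isolation is an unmodified Dijkstra run; this is exactly what makes the termination rule interleaving-independent, matching the ``alternate in any way'' hypothesis. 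Alongside the searches I would track the running best $\mu$, maintained by the \update{} rule $\writemin(\mu,\delta[\vcopy{v}{+}]+\delta[\vcopy{v}{-}])$, i.e. the length of the best $s$--$t$ path discovered so far whose meeting vertex has been reached from both sides. The goal is then to show that at the moment the first vertex $v$ is settled by both searches, $\mu=d(s,t)$, so that terminating and returning $\mu$ is correct.

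The bound $\mu\ge d(s,t)$ is immediate, since every value written into $\mu$ equals $\delta[\vcopy{w}{+}]+\delta[\vcopy{w}{-}]\ge d(s,w)+d(w,t)\ge d(s,t)$. The substance is the matching upper bound $\mu\le d(s,t)$, which I would obtain by a frontier argument. Let $\rho_f$ and $\rho_b$ be the forward and backward search radii at termination (the largest settled distance on each side); since $v$ is settled by both, $\rho_f\ge d(s,v)$ and $\rho_b\ge d(v,t)$, whence $\rho_f+\rho_b\ge d(s,v)+d(v,t)\ge d(s,t)=:L$ by the triangle inequality. Now fix any shortest path $s=x_0,x_1,\dots,x_k=t$. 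Because Dijkstra settles in distance order, every $x_i$ with $d(s,x_i)<\rho_f$ is forward-settled; let $x_j$ be the last forward-settled vertex along the path. If $x_j=t$, the forward search already reached $t$ and $\mu\le L$. Otherwise $x_{j+1}$ is unsettled on the forward side, so $d(s,x_{j+1})\ge\rho_f$, and therefore $d(x_{j+1},t)=L-d(s,x_{j+1})\le L-\rho_f\le\rho_b$, forcing $x_{j+1}$ to be backward-settled. Thus the path edge $(x_j,x_{j+1})$ crosses from the forward-settled region into the backward-settled region. Both endpoints are settled by termination, so at the later of the two settling events the \update{} rule fires on this crossing and records $\mu\le\delta[\vcopy{x_j}{+}]+w(x_j,x_{j+1})+\delta[\vcopy{x_{j+1}}{-}]=d(s,x_j)+w(x_j,x_{j+1})+d(x_{j+1},t)=L$, giving $\mu=L$.

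The main obstacle, and the point I would emphasize, is the tempting but false belief that the shortest path passes through the first common vertex $v$, so that one could simply return $d(s,v)+d(v,t)$; this is not correct in general, which is precisely why the proof must route through the accumulator $\mu$ and the crossing edge rather than through $v$ itself. A secondary technical point is the handling of ties: the implications ``$x_{j+1}$ unsettled $\Rightarrow d(s,x_{j+1})\ge\rho_f$'' and ``$d(x_{j+1},t)\le\rho_b\Rightarrow x_{j+1}$ backward-settled'' can hit boundary cases when several vertices share the same distance on a frontier. I would resolve these either by fixing a consistent tie-breaking rule shared by both searches (settling smaller distance, then smaller index, first) or by an infinitesimal perturbation of the weights that breaks ties without changing which paths are shortest; under such a convention the relevant inequalities become strict and the crossing edge is guaranteed to be scanned. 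Finally I would note that the argument only uses that each \update{} inspects the current best distances from \emph{both} directions, so that whichever endpoint of the crossing edge is settled second triggers the decisive \writemin{} update.
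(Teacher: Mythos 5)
The paper does not actually prove this statement: \cref{thm:bi-dijkstra} is quoted as background from Pohl~\cite{pohl1969bi}, and the only correctness proofs the paper supplies are for its own parallel variants (\cref{thm:bids-correctness}, proved via an induction along a shortest path split at a midpoint $v_m$, and \cref{thm:bdastar-correctness} in the appendix). So there is no in-paper proof to match against; your proposal should be judged on its own, and on those terms it is essentially the classical crossing-edge correctness argument for bidirectional Dijkstra, and it is sound. You correctly isolate the two load-bearing points: (i) the first doubly-settled vertex $v$ need not lie on a shortest path, so one must return the accumulator $\mu$ rather than $d(s,v)+d(v,t)$; and (ii) the radii bound $\rho_f+\rho_b\ge d(s,t)$ forces a shortest path to contain an edge $(x_j,x_{j+1})$ with $x_j$ forward-settled and $x_{j+1}$ backward-settled, at which point $\mu\le d(s,x_j)+w(x_j,x_{j+1})+d(x_{j+1},t)=d(s,t)$. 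Interleaving-independence follows, as you say, because each search is an unmodified Dijkstra run. It is also a nice observation that your argument is the sequential ancestor of the paper's proof of \cref{thm:bids-correctness}: where you use the settled radii $\rho_f,\rho_b$ to locate the crossing edge, the paper's parallel proof replaces Dijkstra's settled order (unavailable in parallel) by the pruning threshold $\mu/2$ and shows neither half of the path is pruned.

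One imprecision worth fixing: you claim that ``at the later of the two settling events the \update{} rule fires on this crossing,'' but in the framework of \cref{alg:ppsp} the \update{} function fires only on a \emph{successful} \writemin{} relaxation, not on a settle or an edge scan. If, when the second search scans the crossing edge, the relaxation of the opposite endpoint fails (its tentative distance is already at least as small), no update fires at that moment. The patch is standard: consider the \emph{last successful relaxations} of, say, $x_j$ from each side. By termination $\delta[\vcopy{x_j}{+}]=d(s,x_j)$ and $\delta[\vcopy{x_j}{-}]\le w(x_j,x_{j+1})+d(x_{j+1},t)$; whichever of these two final decreases occurs later is a successful \writemin{} at a time when the other side's value is already at its final (small) value, so that update records $\mu\le d(s,x_j)+w(x_j,x_{j+1})+d(x_{j+1},t)=d(s,t)$. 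Your tie-handling caveat (the boundary cases $d(s,x_{j+1})=\rho_f$ and $d(x_{j+1},t)=\rho_b$) is real but correctly diagnosed, and either of your proposed remedies (consistent tie-breaking or symbolic perturbation) closes it. With those two repairs made explicit, the proof is complete.
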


The algorithm is correct regardless of the alternation strategy used, 
and existing sequential studies mainly focus on finding better alternation strategies.
For example, Nicholson's algorithm~\cite{nicholson1966finding} selects the vertex with minimum tentative distance among both the forward and backward searches,
while Dantzig's algorithm~\cite{dantzig1963linear,goldberg2005computing} alternates strictly between the forward and backward searches.

Unfortunately, parallel SSSP algorithms achieve parallelism by processing multiple vertices simultaneously
and have limited information about which vertices have been settled.
Therefore, it has been open on how to efficiently use the bidirectional search to accelerate parallel PPSP.
To the best of our knowledge, the only parallel PPSP algorithm that applies \bds{} is from PnP~\cite{xu2019pnp}.
However, unlike sequential \bds{}, PnP only uses \bds{} in preprocessing to predict which direction (from $s$ or $t$) results in less computation
and then uses the standard unidirectional search with ET.


\myparagraph{Our \bds{}.}\label{sec:bids}
We design our \bds{} by plugging in the new $\prune$ and $\update{}$ functions in \ourlib{}, 
which are independent of Dijkstra's order.
Note that the key difference here is that, in parallel SSSP, an algorithm only knows the tentative distance on vertex $v$ but cannot determine whether and when it converges to the true distance. 
In contrast, in Dijkstra's order, every vertex popped from the priority queue is settled with its true distance.
Our \bds{} is based on an observation that, in \bids{}, given any upper bound $\mu$ of $d(s,t)$, 
we can skip a vertex $v$ with tentative distance $\delta[v]\geq \mu/2$.
This indicates that any further searches beyond this point are wasteful and thus pruned.
Unlike existing parallel PPSP work that does not apply or can only partially apply \bds{},
our algorithm can apply \bds{} throughout the entire search.
We prove the correctness of our algorithm in ~\cref{thm:bids-correctness}.

We now show how to implement \bds{} in our framework in~\cref{alg:ppsp}.
Similar to the sequential \bds{}, our \bds{} also starts a forward search from $s$ and a backward search from $t$,
so a vertex can have two copies in the frontier.
To distinguish these two copies, we use $\oplus \in \{+, -\}$ to denote the source:
$\vcopy{v}{+}$ represents the copy originating from $s$, and $\vcopy{v}{-}$ represents the copy from $t$.
We also use $\delta[\vcopy{v}{+}]=\delta_{s}[v]$ and $\delta[\vcopy{v}{-}]=\delta_{t}[v]$.
Since the searches from both $s$ and $t$ start with themselves, we first add $\vcopy{s}{+}$
and $\vcopy{t}{-}$ to the frontier and initialize their distances to zero.
Before our algorithm processes a vertex $\vcopy{v}{\cdot}$ from either direction, 
the $\prune$ function checks whether $\delta[\vcopy{v}{\cdot}]\ge \mu/2$ and skips it if so. 
Finally, when processing a vertex $v$, the $\update$ function always attempts to examine the path $s$--$t$--$v$, and
updates $\mu$ to $\delta[\vcopy{v}{+}]+\delta[\vcopy{v}{-}]$ if it provides a lower value.
Since the pruning condition depends only on the tentative distances,
it is oblivious to whether a vertex is settled.
We formally analyze the correctness of our approach with the theorem below.

\begin{theorem}\label{thm:bids-correctness}
  Given a nonnegative weighted undirected graph $G=(V,E,w)$, our \bds{} algorithm can correctly compute the shortest $s$-$t$ distance 
  by running any stepping algorithm on parallel SSSP 
  from both $s$ and $t$ and skipping (not relaxing the neighbors of) any vertex $v$ with a tentative distance $\delta[v]\geq \mu/2$, where $\mu$ is the current shortest distance between $s$ and $t$, updated in \cref{alg:ppsp}.
\end{theorem}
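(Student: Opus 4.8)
The plan is to establish two halves---a \emph{soundness} invariant showing $\mu$ never drops below $d(s,t)$, and a \emph{completeness} argument showing $\mu$ eventually reaches $d(s,t)$---which together force $\mu=d(s,t)$ at termination. For soundness I would first observe that every tentative distance is the length of a genuine walk: starting from $\delta[\vcopy{s}{+}]=0$ and $\delta[\vcopy{t}{-}]=0$, each successful \WriteMin{} on \cref{line:ppsp:relax} sets $\delta[\vcopy{v}{+}]$ (resp.\ $\delta[\vcopy{v}{-}]$) to the length of an $s$-$v$ (resp.\ $t$-$v$) walk, so $\delta[\vcopy{v}{+}]\ge d(s,v)$ and $\delta[\vcopy{v}{-}]\ge d(t,v)$ at all times. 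Since $\update$ only ever assigns $\mu\gets\delta[\vcopy{v}{+}]+\delta[\vcopy{v}{-}]$, the triangle inequality gives $\mu\ge d(s,v)+d(t,v)\ge d(s,t)$. This is the easy half, but it is crucial because it pins the pruning threshold: writing $d:=d(s,t)$, we have $\mu\ge d$ always, so the cutoff $\mu/2$ never falls below $d/2$.

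The core of the completeness argument is a \emph{reachability lemma}: any vertex $w$ with $d(s,w)<d/2$ is eventually relaxed to its true distance $d(s,w)$ by the forward search and then extracted and expanded (and symmetrically for the backward search when $d(t,w)<d/2$). I would prove this by induction along a shortest $s$-$w$ path $s=x_0,\dots,x_\ell=w$: every prefix vertex satisfies $d(s,x_i)\le d(s,w)<d/2\le\mu/2$, so whenever $\delta[\vcopy{x_i}{+}]$ holds its true value the \prune{} tests on \cref{line:ppsp:check} and \cref{line:ppsp:prune} return \false{}; hence $x_i$ is never pruned, is added to the frontier with the correct distance, and---by the standard extraction guarantee of the stepping framework---is eventually extracted and expanded, relaxing $x_{i+1}$ to $d(s,x_{i+1})$. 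The only nonroutine point is that a shrinking $\mu/2$ might in principle re-prune a vertex, which the soundness invariant rules out precisely because $\mu/2\ge d/2$ never crosses $d(s,x_i)$.

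To finish, I would fix a shortest path $P=(s=u_0,\dots,u_k=t)$ and write $a_i=d(s,u_i)$, so $d(t,u_i)=d-a_i$ and the $a_i$ strictly increase. Let $j$ be the largest index with $a_j<d/2$ and focus on the \emph{crossing vertex} $u_{j+1}$, the first vertex of $P$ at forward distance $\ge d/2$. Its predecessor has $a_j<d/2$, so by the reachability lemma the forward search expands $u_j$ and relaxes $u_{j+1}$, setting $\delta[\vcopy{u_{j+1}}{+}]=a_{j+1}=d(s,u_{j+1})$. Symmetrically $d(t,u_{j+1})=d-a_{j+1}\le d/2$; either $u_{j+1}=t$, in which case $\delta[\vcopy{u_{j+1}}{-}]=0$ by initialization, or its successor $u_{j+2}$ has $d(t,u_{j+2})<d/2$, so the backward search expands $u_{j+2}$ and relaxes $u_{j+1}$, setting $\delta[\vcopy{u_{j+1}}{-}]=d-a_{j+1}=d(t,u_{j+1})$. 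Crucially, each of these relaxations fires the \update{} call on \cref{line:ppsp:update}; this happens even if $u_{j+1}$ is itself pruned from expansion, since \update{} is triggered by the relaxation from an \emph{expanded neighbor}, not by expanding $u_{j+1}$. Taking the later of the two relaxation events, the opposite direction's tentative distance already equals its true value, so \update{} performs $\mu\gets\min\{\mu,\,d(s,u_{j+1})+d(t,u_{j+1})\}=\min\{\mu,d\}\le d$; with soundness this yields $\mu=d(s,t)$ at termination.

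I expect the main obstacle to be exactly this timing and boundary interaction in the last step: unlike sequential bidirectional Dijkstra (\cref{thm:bi-dijkstra}), the parallel stepping search settles vertices out of distance order, so there is no ``first commonly settled vertex'' at which to stop. The argument must instead show that the $\mu/2$ cutoff---anchored below at $d/2$ by soundness---always leaves the two true half-distances of $u_{j+1}$ exposed to relaxation, and that the \update{} at the second of the two relaxations reads an already-settled value from the opposite direction. The degenerate case $a_{j+1}=d/2$, where $u_{j+1}$ may legitimately be pruned from expansion, is what makes the ``update-on-relaxation'' observation essential rather than cosmetic, since there the correct $\mu$ can only be discovered while $u_{j+1}$ is being relaxed from its two expanded neighbors.
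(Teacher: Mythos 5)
Your overall route is essentially the paper's: fix a shortest $s$-$t$ path, argue by induction that each half of it is expanded to true distances because those distances stay below the live cutoff $\mu/2$, and let a crossing vertex receive its true forward and backward distances via relaxations from its two expanded neighbors, with \update{} firing on relaxation rather than expansion---the paper's proof of \cref{thm:bids-correctness} uses exactly this at its middle vertex $v_m$, which (as in your degenerate case) may itself be pruned from expansion. Making the soundness invariant $\mu \ge d(s,t)$ explicit is a genuine improvement in rigor; the paper uses it only implicitly through the case split between $\mu=\mu'$ and $\mu'<\mu$.

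However, there is a gap at the midpoint under the theorem's stated hypothesis of \emph{nonnegative} weights: you assert that the $a_i$ strictly increase, which fails when the shortest path contains zero-weight edges. Concretely, if consecutive path vertices $u_{j+1}$ and $u_{j+2}$ are joined by a zero-weight edge and both sit at forward distance exactly $d/2$, then $d(t,u_{j+2})=d/2$, not $<d/2$, so your reachability lemma does not cover $u_{j+2}$ in the backward direction, and no single vertex is guaranteed to be relaxed to its true distance from both sides: the forward search is only guaranteed up to $u_{j+1}$, the backward search only down past the other end of the plateau, and because your cutoff is anchored only at $d/2\le\mu/2$, a vertex whose relevant distance equals $d/2$ may legitimately be pruned, so nothing forces the plateau to be traversed. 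The paper sidesteps this by arguing relative to the strict gap between $\mu'=d(s,t)$ and the current $\mu$: it first disposes of the case $\mu=\mu'$, and in the remaining case every vertex on the relevant half satisfies $d(s,v)\le\mu'/2<\mu/2$ (resp.\ $d(v,t)\le\mu'/2<\mu/2$) \emph{strictly}, so even exact-midpoint vertices are expanded while $\mu>\mu'$. Your proof is repaired the same way (or by assuming positive weights): replace the fixed anchor $d/2$ by the observation that as long as $\mu>d$ the cutoff $\mu/2$ strictly exceeds $d/2$, so all path vertices with relevant distance $\le d/2$ are expanded and the two searches meet at a common vertex; if instead $\mu=d$ at any point, your soundness invariant already finishes the proof.
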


\begin{proof}\label{proof:bids}
  Recall that $\delta[v]$ is the tentative distance and finally converges to the true distance $d(v)$, so we have $\delta[v]\geq d(v)$.
  Assume the true distance between $s$ and $t$ is $\mu'=d(s,t)$.
  Note that when $\mu$ has reached $\mu'$ (i.e., $\mu=\mu'$) in the algorithm, then the shortest distance $d(s,t)$ has been computed.
  
  Hence, we now focus on the case when $\mu'<\mu$. 
  As presented in~\cref{fig:bids}, assume one of the shortest paths 
  between $s$ and $t$ is $P=\{s, v_1, \cdots, v_{m-1}, v_m, v_{m+1}, \cdots, t\}$.
  Clearly, for all $v\in P$, $d(s,v)+d(v,t)=\mu'$.
  We assume vertex $v_m$ divides the path into two halves, $P_f=\{s, v_1, \cdots, v_{m-1}\}$ and $P_b=\{v_{m+1}, \cdots, t\}$,
  such that vertex $v\in P_f$ satisfies $d(s,v)\le \mu'/2<\mu/2$.
  Similarly, vertex $v\in P_b$ satisfies $d(s,v)\ge\mu'/2$,
  which implies $d(v,t)=\mu'-d(s,v)\le \mu'/2<\mu/2$.
  
  We prove by induction that all true distances on the path will be computed, and the true distance $d(s,t)$ will be correctly updated to the variable $\mu$ to be $\mu'$ and returned.
  In the forward search, $s$ is the source with $\delta[\vcopy{s}{+}]=d(s,s)=0$, which is the base case and not pruned. 
  Hence, $s$ updates $v_1$ in the first round with $\delta[\vcopy{v_1}{+}]=d(s, v_1)<\mu/2$ in the forward search.
  Similarly, all $v_i\in P_f\backslash \{s\}$ are updated by $v_{i-1}$ with $\delta[\vcopy{v_i}{+}]\le \mu'/2\le \mu/2$ and thus not pruned.
  Finally, $v_{m-1}$ updates $\delta[\vcopy{v_m}{+}]$ to the true distance $d(s,v_m)$.
  Similarly, none of the vertices on $P_b$ are pruned and will finally get their true distance. 
  Finally, $v_{m+1}$ updates $\delta[\vcopy{v_m}{-}]$ to the true distance $d(v_m,t)$.
  Since we have the true distance of $v_m$ in both forward and backward searches, we can obtain the true distance between $s$ and $t$ by concatenating the path $s$--$v_m$--$t$.
  The return value $\mu$ is correctly updated by the $\update{}$ function with $\mu=\delta[\vcopy{v}{+}]+\delta[\vcopy{v}{-}]=\mu'$, when the later direction is successfully updated.
\end{proof}

Despite its conceptual simplicity, this approach demonstrates very good performance in our experiments 
and readily fits into our framework. 
More importantly, we will show that this strategy facilitates our \bdastar{} in \cref{sec:algo:biastar} and our batch PPSP algorithms in \cref{sec:framework}. 
Surprisingly, we did not find existing work that employs such a simple heuristic for \bids{}.
We attribute such a gap to the inherent difference between sequential and parallel SSSP---sequentially,
since only \emph{one} vertex is processed at a time, it is natural (and easy) to find a settled vertex to explore, and thus no ``pruning'' is needed. 
We believe the simplicity is particularly interesting as an example of \emph{parallel thinking}---effective parallelization does not always add significant complication to a sequential algorithm, and can be made simple by a different way of thinking. 

\begin{figure}
  \centering
  \includegraphics[width=\columnwidth]{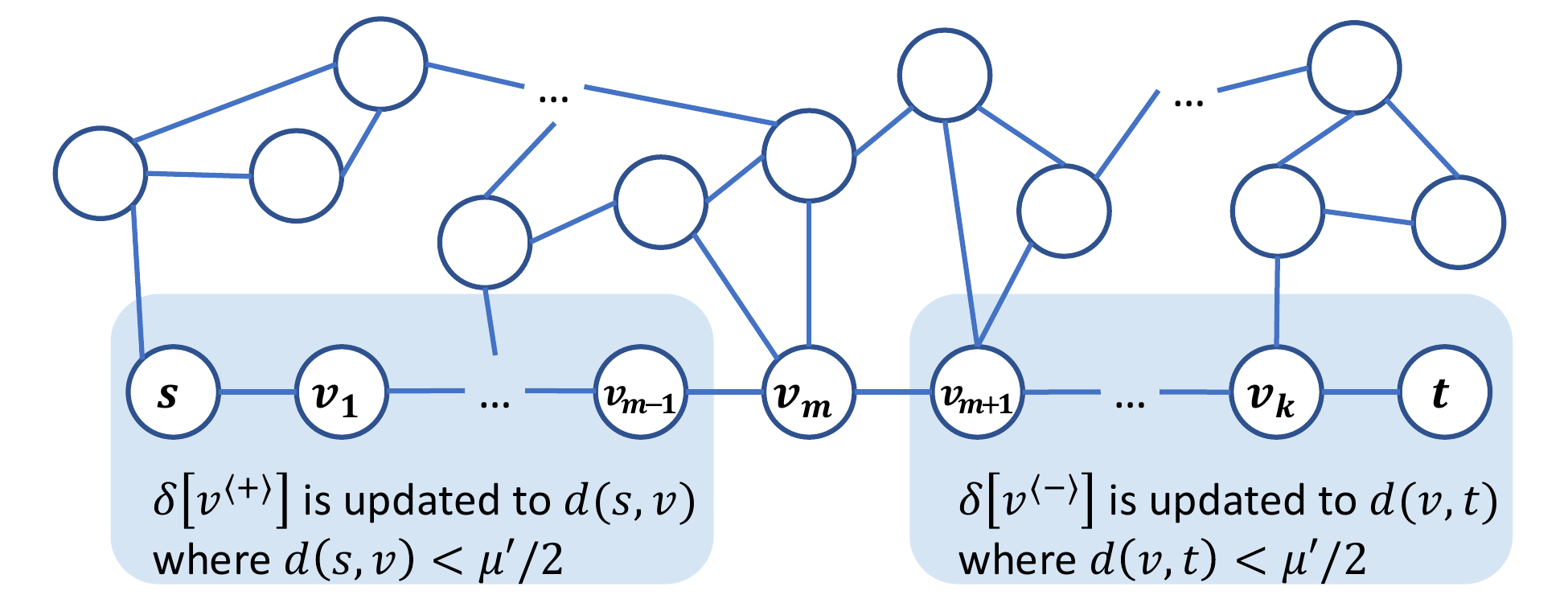}
  \caption{\textbf{Illustration for the proof of \cref{thm:bids-correctness} for our \bids{} algorithm.}
  }\label{fig:bids}
\end{figure}

\hide{
\begin{proof}\label{proof:bids}
  We only need to consider the searches that are pruned by the terminating conditions
  since the rest of the searches are the same as in the standard SSSP algorithms.
  Note that pruning the search on a vertex does not mean it will not be visited in the subsequent search. 
  It only means it will not be added to the frontier even if it is a successful relaxation (see~\cref{line:ppsp:add}). 
  If vertex $v$ is successfully relaxed again (\cref{line:ppsp:update}), the pruning condition will be re-evaluated and the vertex is possibly added to the frontier.
  Assume the current best path distance is $\mu$, the final distance is $\mu'$, and a vertex $v$ is pruned by \bds{}.
  WLOG, assume $v$ is explored from the forward search (i.e., $2\cdot\delta[v^+] \geq \mu$).
  If $\mu = \mu'$, then it is already the final shortest distance, and terminating at any vertex is safe.
  Otherwise, there are two cases.
  1) If $\mu'$ is reached by a path that does not go through $v$, then terminating at~$v$ does not affect the correctness.
  2) If $\mu'$ is reached by a path that goes through $v$, which means there exists $\mu'=\delta[v^+]+\delta[v^-] < \mu$.
  Since we have $2\cdot\delta[v^+] \geq \mu$, then $\delta[v^-]=\mu'-\delta[v^+] < \mu-\delta[v^+] \leq \mu/2$.
  It implies $v$ will be reached by the backward search since $v$ satisfies $\delta[v^-] < \mu/2$ and will not be pruned.
  Thus, the optimal shortest path can be correctly found by \bds{}.
\end{proof}
}




\hide{
Since a vertex can be reached from the source and the target,
we need to maintain its direction in the stepping algorithm framework.
We use $\langle v, +\rangle$ to denote vertex $v$ is reached from the source,
and $\langle v, -\rangle$ if it is from the target.
We first initialize the tentative distances
and add both the source and target to the frontier (\cref{line:ppsp:init}).
The algorithm explores the frontier in steps as in regular SSSP,
except for that whenever a vertex $v$ is successfully relaxed (\cref{line:ppsp:relax}),
it updates $\mu$ with $\delta[\langle v, +\rangle] + \delta[\langle v, -\rangle]$,
which implies a upper bound on the distances of which vertices \ourppsp{} needs to search.
Thus, if the tentative distance of a vertex multiplied by two is no less than $\mu$,
we can prune it from the frontier (\cref{line:ppsp:prune}).
Our \bds{} overcomes avoids considering settled vertices and makes \bds{} in parallel PPSP possible.
}

\subsection{Bidirectional Astar (\bdastar{})}\label{sec:algo:biastar}
With \bds{} and \astar{}, it is natural to consider combining them as the bidirectional $A^*$ (\bdastar{}). 
However, \bdastar{} is challenging even in the sequential setting. 
Given two valid heuristic functions $h_s(v)$ estimating the distance to the source 
and $h_t(v)$ estimating the distance to the target,
let $h_F(\cdot)$ be the heuristic used in the forward search (from $s$)
and $h_B(\cdot)$ the heuristic in the backward search (from $t$). 
A simple approach is to use $h_F(\cdot)=h_t(\cdot)$ for the forward search and $h_B(\cdot)=h_s(\cdot)$ for the backward search and terminate when the two searches touch.
Unfortunately, this solution is incorrect~\cite{goldberg2005computing,ikeda1994fast}.
Intuitively, as in \astar, we can consider the induced graph $G'=(V,E,w')$ with modified edge weights $w'(u,v)=w(u,v)-h(u)+h(v)$.
In the forward search, an edge $(u,v)$ has weight $w'(u,v)=w(u,v)-h_F(u)+h_F(v)$,
while its mirrored edge in the backward search has weight $w'(v,u)=w(v,u)-h_B(v)+h_B(u)$,
which are not equal.
This difference breaks the consistency of the heuristic
and thus leads to incorrect results in this simple solution. 
To fix this issue, the most commonly used technique~\cite{goldberg2005computing} is to 
ensure $h_B(v)+h_F(v)=0$ for any vertex $v\in V$. 
Therefore, the modified edge weights $w'$ are identical in both forward and backward searches. 
We will discuss how our solution achieves a consistent heuristic for \bids{} below. 

As with \bids{}, the ideas in all sequential \bdastar{} rely on using Dijkstra's order. 
To parallelize this idea, we also combine \bdastar{} with the high-level idea mentioned in \cref{sec:algo:bds}
and prove that it can be combined with any consistent heuristic functions. 

\hide{
There are two methods to correct it.
The first possible solution is to maintain the current best path distance $\mu$, 
and update it every time a vertex is visited by both the forward and backward search, 
and continues the searches until $\mu$ could not be updated.
The first method cannot stop as soon as two searches meet.
The second, or the more widely used approach, 
is to use different heuristic functions, $h'_s(v)=(h_t(v)-h_s(v))/2$ for the forward search and $h'_t(v)=(h_s(v)-h_t(v))/2$ for the backward search.
These two new heuristic function are often referred to the average functions,
as they are the ``average'' of the estimation to the source and the target.
With the second method, \bdastar{} can stop as soon as two searches meet~\cite{goldberg2005computing}.
}

\hide{

However, here we have the same challenge as we met in previous problems---the sequential \bdastar{} approach rely on the searches in Dijkstra's order, which is no longer true in the parallel SSSP algorithms.
To the best of our knowledge, we are unaware of any existing \bdastar{} approach without obeying Dijkstra's order.

\myparagraph{Our \bdastar{} Algorithm.}
We now give our \bdastar{} algorithm, which is the first parallel \bdastar{} approach as far as we know. 
Our high-level idea is to integrate existing sequential \bdastar{} heuristics to our new \bids{} algorithm in \cref{sec:bids}. 
Although the rigid analysis is complicated (even sequential \bdastar{} is quite involved), the overall idea remains simple but effective, and readily fits with our PPSP framework (\cref{alg:ppsp}).

Now, with the aforementioned techniques to effectively integrate \bds{} and \astar{} in parallel PPSP, we are ready to introduce the bidirectional \astar{} (\bdastar{}) in \ourlib{}.
However, directly combining \bds{} and \astar{} is challenging even in sequential.
Consider if we simply apply the heuristic function in \astar{} to \bds{}.
First of all, since the heuristic function only provide lower-bound distances,
the first vertex settled by both forward and backward searches are not guaranteed to appear on the shortest path,
thus the \bdastar{} cannot stop at this point as in \bds{}.
To guarantee correctness, we have to maintain a best path distance $\mu$, 
updates it accordingly every time a vertex is visited by both the forward and backward search,
and continues the searches until $\mu$ could not be updated.
However, this approach could be slow.
In most cases, the given heuristic function is the coordinates of the vertices,
which means when two vertices are far, the heuristic could be far off the actual distance.
With \bds{}, the forward and backward searches should ideally meet at the middle of the shortest path,
at which point the heuristic function is not accurate enough,
thus more searches are require to refine the best path distance.
Sometimes it can make \bdastar{} slower than a simple \bds{} due to the overhead of computing the heuristics.


\myparagraph{Existing Work to Combine \bds{} and \astar{}.}
Existing work~\cite{goldberg2005computing,ikeda1994fast,pijls2009yet} handles this
either by using new termination conditions (symmetric approach) or using consistent heuristic functions (consistent approach):

\begin{enumerate}[wide]
  \item The symmetric approach~\cite{pohl1969bi,kwa1989bs,goldberg2005computing}: instead of stopping when two searches meet, the symmetric approach maintains the current best distance $\mu$ and terminates when the algorithm is about to scan a vertex $v$ with $\delta[v]+h(v) \geq \mu$.
  \item The consistent approach~\cite{goldberg2005computing,ikeda1994fast}: instead of using the original heuristic functions, the consistent approach uses two heuristic functions
  $h_s'(\cdot)$ and $h_t'(\cdot)$, such that $h_s'(v)+h_t'(v)=c$ for any vertex $v$ and $c$ is a constant.
  One common practice is to use the average functions $h'_s(v)=(h_t(v)-h_s(v))/2$ and $h'_t(v)=(h_s(v)-h_t(v))/2$,
  where $h_s(\cdot)$ and $h_t(\cdot)$ are the original heuristic functions for the search from the source and the target, respectively.
  The consistent approach can stop as soon as the two searches meet.
  The correctness proof is presented in~\cite{ikeda1994fast}.
\end{enumerate}

\hide{
Both the symmetric approach and consistent approach require strict ordering of relaxation to terminate early.
For better parallelism, parallel SSSP usually scan vertices in batches rather than one after another as in the Dijkstra algorithm.
However, scanning in batches loses the guarantees that the vertex to be scanned is settled,
which means our algorithm cannot stop right after the two searches meet.
}
Both approaches require strict ordering of relaxation to terminate early,
which does not directly apply to parallel SSSP algorithms.
Nevertheless, both ideas provide useful insights to combine \bds{} and \astar{}.
In \ourlib{}, we borrow the ideas from both approaches
and carefully integrate them into our framework in \cref{alg:ppsp}.
}

\myparagraph{Our Parallel Bidirectional \astar{}.}
To enable \bdastar{}, our high-level idea is to select the appropriate \emph{consistent} heuristics for both forward and backward searches, 
ensuring that our algorithms, 
when running on the original graph with the heuristics,
behave as if it were running on a graph (i.e., the induced graph) without the heuristics.
This allows us to directly apply our \bids{} pruning criterion on the induced graph, as it effectively appears heuristic-free.

To obtain consistency, we adapt the heuristics from previous work~\cite{goldberg2005computing,ikeda1994fast}. 
We set $h_{F}(v)=(h_t(v)-h_s(v))/2$ for the forward search 
and $h_{B}(v)=(h_s(v)-h_t(v))/2$ for the backward search,
which implies $h_{F}(v)+h_{B}(v)=0$ and thereby preserves heuristic consistency.
The intuition here is that, 
rather than directing the forward search towards the target and the backward search towards the source, 
both searches are guided towards the ``perpendicular bisector'' region between the source and target,
which is essentially a set of points equidistant from the source and target.

We now present how to implement our \bids{} under the framework in~\cref{alg:ppsp}.
We first add $\vcopy{s}{+}$ and $\vcopy{t}{-}$ to the frontier and initialize their distances to zero.
These steps are the same as in our \bids{}.
Note that our algorithm runs in steps,
and in each step, it selects a set of vertices within a certain distance threshold (\cref{line:ppsp:extract} in \cref{alg:ppsp}).
In \bdastar, we use the sum of the tentative distance and the heuristic function instead of only the tentative distance.
In addition, when our algorithm is about to scan a vertex $\vcopy{v}{+}$ in the forward search,
the $\prune$ function checks if $\delta[\vcopy{v}{+}]+h_F(v)\ge \mu/2$ and skips it if the condition holds.
Similarly, for a vertex $\vcopy{v}{-}$ in the backward search,
it checks whether $\delta[\vcopy{v}{-}]+h_B(v)\ge \mu/2$ instead.
Finally, when a vertex $v$ is reached from both directions, an $s$-$t$ path is found by concatenating $s$--$v$ and $v$--$t$,
so the $\update$ function updates the current $s$-$t$ distance $\mu$ with $\delta[\vcopy{v}{+}]+\delta[\vcopy{v}{-}]$,
which will eventually converge to the true distance.

\hide{
Our \bdastar{} algorithm also maintains the current best distance $\mu$, 
and updates it whenever both forward and backward searches visit the same vertex.
This part is consistent with our \bids{} algorithm.
As discussed above, we need to ensure the heuristics for the forward and backward search satisfy $h_F(v)+h_B(v)=0$ for any $v$. 
We adapt the consistent heuristic functions from previous work~\cite{goldberg2005computing,ikeda1994fast}.
We set $h_{F}(v)=(h_t(v)-h_s(v))/2$ for the forward search 
and $h_{B}(v)=(h_s(v)-h_t(v))/2$ for the backward search,
so that $h_{F}(v)+h_{B}(v)=0$ and the consistency of the heuristic is preserved.
The high-level idea behind the new heuristic functions is, 
instead of prioritizing the forward search expanding to the target and the backward search to the source,
we prioritize both searches expanding towards the ``perpendicular bisector'' region of the source and the target,
which can also be considered as the midpoints of the source and target.
We now show that combining this heuristic and our new pruning strategy not only 
effectively guides the searches to prioritize the midpoints, 
it also provides an extra stopping condition: \bdastar{} can prune forward (or backward) searches on a vertex $v$ if $\delta[\vcopy{v}{+}]+ h_{F}(v)\geq \mu/2$ (or $\delta[\vcopy{v}{-}]+ h_{B}(v)\geq \mu/2$).
}

We now show that our algorithm can correctly compute the shortest paths in \cref{thm:bdastar-correctness} with \emph{any consistent} heuristic functions.
\ifconference{For the page limit, we present the proof in the full paper~\cite{dong2025parallelfull}.} 
\iffullversion{The proof is presented in the Appendix. }
\hide{We will show that running \bdastar{} on the input graph $G$ is equivalent to running \bids{} on an auxiliary graph $G'$, and our algorithm can correctly compute the shortest paths on $G'$, and thus $G$.}

\begin{theorem}\label{thm:bdastar-correctness}
  Given a nonnegative weighted undirected graph $G=(V,E,w)$, a source $s$, a target $t$, and a consistent heuristic function $h(\cdot)$ (i.e, for all $(u, v)\in E$, $h(u)\leq w(u,v)+h(v)$),
  our \bdastar{} algorithm can correctly compute the shortest $s$-$t$ distance 
  by running any stepping parallel SSSP algorithms from both $s$ and $t$,
  and skipping (i.e., not relaxing the neighbors of) 
  any vertex $v$ having a tentative distance $\delta[\vcopy{v}{+}]+h(v)\geq \mu/2$ if $v$ originates from the source, or $\delta[\vcopy{v}{-}]-h(v)\geq \mu/2$ if $v$ originates from the target,
  where $\mu$ is the current shortest $s$-$t$ distance that is updated in~\cref{alg:ppsp}.
\end{theorem}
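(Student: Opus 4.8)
The plan is to prove the theorem by \emph{reduction}: I would show that running \bdastar{} on $G$ with the consistent heuristic $h(\cdot)$ is exactly equivalent to running the heuristic-free \bids{} algorithm of \cref{thm:bids-correctness} on the \emph{induced graph} $G'=(V,E,w')$ with $w'(u,v)=w(u,v)-h(u)+h(v)$. By \cref{fact:consistency}, consistency of $h$ guarantees every $w'(u,v)\ge 0$, so $G'$ is a legal nonnegative-weight instance, and the $s$-$t$ distance only shifts by a constant: $d'(s,t)=d(s,t)-h(s)+h(t)$. Once the equivalence is established, \cref{thm:bids-correctness} immediately certifies that \bids{} on $G'$ returns $d'(s,t)$, and undoing the shift yields the correct $d(s,t)$ on $G$.

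The first step is to relate the tentative distances maintained on $G$ to those on $G'$. Since relaxations in \cref{alg:ppsp} use the original weights $w$, the forward search keeps $\delta[\vcopy{v}{+}]$ as a tentative $s$-to-$v$ distance on $G$; telescoping the weight correction along the explored path gives the identity $\delta'_s[v]=\delta[\vcopy{v}{+}]-h(s)+h(v)$ for the corresponding $G'$-distance. Because the backward search uses heuristic $h_B=-h$, the same telescoping gives $\delta'_t[v]=\delta[\vcopy{v}{-}]-h(v)+h(t)$. I would then check two consequences: (i) the step-ordering keys used by \bdastar{}, namely $\delta[\vcopy{v}{+}]+h(v)$ forward and $\delta[\vcopy{v}{-}]-h(v)$ backward, equal $\delta'_s[v]+h(s)$ and $\delta'_t[v]-h(t)$; and (ii) the $\update$ rule $\mu=\delta[\vcopy{v}{+}]+\delta[\vcopy{v}{-}]$ corresponds to the $G'$ best distance $\mu'=\mu-h(s)+h(t)$, again the constant shift of \cref{fact:consistency}.

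The crux is to verify that the orderings and the two pruning conditions coincide. Substituting the distance identities into the \bids{} criterion $\delta'\ge\mu'/2$ on $G'$ turns the forward case into $\delta[\vcopy{v}{+}]+h(v)\ge \mu/2+(h(s)+h(t))/2$ and the backward case into $\delta[\vcopy{v}{-}]-h(v)\ge\mu/2+(h(s)+h(t))/2$. These match the stated thresholds $\delta[\vcopy{v}{+}]+h(v)\ge\mu/2$ and $\delta[\vcopy{v}{-}]-h(v)\ge\mu/2$ precisely when $h(s)+h(t)=0$; the same condition makes the two step-ordering keys from (i) shift by the \emph{same} constant $h(s)$, so the combined frontier is ordered exactly as by the plain $G'$-distances. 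This normalization is guaranteed by our choice $h=h_F=(h_t-h_s)/2$ (with $h_s(s)=h_t(t)=0$ and a symmetric base heuristic, so $h_t(s)=h_s(t)$). Since $G'$ and the distance shift are unaffected by adding a constant to $h$, assuming $h(s)+h(t)=0$ loses no generality, and it is exactly this re-centering that aligns the stated $\mu/2$ thresholds with the \bids{} criterion on $G'$.

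Having matched the initialization, the step order, the relaxations, the $\update$ rule, and the pruning test, the two executions produce identical frontiers at every step, so \bdastar{} on $G$ behaves identically to \bids{} on $G'$. Invoking \cref{thm:bids-correctness} on the nonnegative-weight graph $G'$ then shows $\mu'$ converges to $d'(s,t)$, and the constant shift gives $\mu\to d(s,t)$, completing the proof. I expect the main obstacle to be the threshold alignment in the third paragraph: the factor-$1/2$ is what makes \bids{} subtle, and carrying it faithfully through the reduction hinges on consistency of $h$ together with the $h_F+h_B=0$ normalization; a general non-normalized consistent heuristic with $h(s)+h(t)\neq 0$ would require either re-centering $h$ or re-running the direct path-splitting argument of the proof of \cref{thm:bids-correctness} on $G'$.
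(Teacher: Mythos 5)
Your proposal follows the same route as the paper's proof: reduce \bdastar{} on $G$ to heuristic-free \bids{} on the induced graph $G'=(V,E,w')$, use \cref{fact:consistency} for nonnegativity of $w'$, telescope along the explored path to get $\delta'_s[v]=\delta[\vcopy{v}{+}]+h(v)-h(s)$ and $\delta'_t[v]=\delta[\vcopy{v}{-}]-h(v)+h(t)$, and then invoke \cref{thm:bids-correctness}. The divergence is in the comparison step. You prove an \emph{exact} execution equivalence, which forces the normalization $h(s)+h(t)=0$; the paper instead keeps a general consistent $h$ and argues only a one-sided domination: the \bdastar{} tests differ from the \bids{} tests on $G'$ by constant shifts ($h(s)$ forward, $h(t)$ backward), so \bdastar{} ``prunes only a subset'' of what \bids{} prunes and inherits correctness, since pruning fewer vertices than a provably-correct rule is safe. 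Your bookkeeping is in fact the sharper of the two: comparing against the \bids{} rule on $G'$ (whose maintained bound is $\mu'=\mu-h(s)+h(t)$), the gap between the thresholds is exactly $-(h(s)+h(t))/2$, so subset-pruning requires $h(s)+h(t)\le 0$ --- with equality in the implemented case $h=h_F=(h_t-h_s)/2$ under symmetric base heuristics --- and the paper's appeal to the signs $h(s)\ge 0$, $h(t)\ge 0$ alone does not deliver this; your condition $h(s)+h(t)=0$ is precisely the right one, and your proposal makes explicit the threshold accounting that the paper's proof glosses.

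Two caveats on your side. First, as you yourself concede, the ``WLOG $h(s)+h(t)=0$'' step does not literally go through: adding a constant to $h$ leaves $w'$ and $G'$ untouched, but it changes the stated prune tests ($\delta[\vcopy{v}{+}]+h(v)\ge\mu/2$ becomes $\delta[\vcopy{v}{+}]+h(v)+c\ge\mu/2$), so your argument as written covers the implemented heuristic but not an arbitrary consistent $h$ as the theorem statement permits. The clean fix is the one you name at the end: rerun the path-splitting argument from the proof of \cref{thm:bids-correctness} on $G'$ with the asymmetric thresholds $\mu/2-h(s)$ (forward) and $\mu/2+h(t)$ (backward), observing that they sum to $\mu'=\mu-h(s)+h(t)$, which is all the splitting argument needs whenever $\mu>d(s,t)$. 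Second, your effort to align the \extcond{} step-ordering keys is unnecessary: \cref{thm:bids-correctness} is stated for \emph{any} stepping schedule, so only the relaxations and the \prune{}/\update{} decisions need to correspond between the two executions, not the order in which vertices are extracted. With these two points tightened, your argument is at least as rigorous as the paper's own.
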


\hide{
\begin{proof}\label{proof:bdastar}
  Assume one of the shortest paths between $s$ and $t$ is $v_0$--$v_1$--$\cdots$--$v_k$, where $v_0=s$ and $v_k=t$.
  We also assume $d(\cdot)$ is the final distances.
  Since $\delta[\cdot]$ is the tentative distances, $d(v) \leq \delta[v]$ for an $v$.
  For any $v_i$ $(0\leq i\leq k)$, $d(v_i^+)+d(v_i^-) = \mu$.
  Therefore, for any vertex $v_i$, 
  \xiaojun{Figure out the conditions.}
  \hide{
  1) if $d(v_i^+) < \mu/2$, $v_i$ will not be skipped from the subsequent search path from $s$. 
  2) if $d(v_i^-) < \mu/2$, $v_i$ will not be skipped from the subsequent search path from $t$. 
  3) if $d(v_i^+) =d(v_i^-) = \mu/2$, although neither $v_i^+$ or $v_i^-$ will not continue the 
  searches, these two vertices have been settled to their final distances. 
  }
  Therefore, the skipping condition provided by our algorithm does not change the exact path distance between the given $s$ and $t$.
\end{proof}
}

We note that the heuristic functions used in our paper are relatively straightforward.
More advanced \bdastar{} heuristics have been developed, 
such as BAE$^*$~\cite{sadhukhan2013bidirectional} and DIBBS~\cite{sewell2021dynamically}.
For a broader discussion on heuristic strategies in bidirectional search, we refer the reader to the survey~\cite{sturtevant2018brief}.

In~\cref{sec:exp}, we will  show that combining \bids{} and \astar{} into \bdastar{}
significantly improves the performance for PPSP.


\hide{
To overcome this challenge, we combine ideas from both the symmetric and the consistent approaches.
We use the average functions from the consistent approach as the heuristic functions.
However, in parallel PPSP, the algorithm cannot terminate as soon as two searches meet
since the vertices are not processed in strictly ascending distance ordering.
To obtain an efficient parallel bidirectional \astar{},
we again apply the meet-in-the-middle idea mentioned in \cref{sec:bids},
and develop a bidirectional termination condition that works with the heuristic function.
Recall that our algorithm keeps track of the current best path distance $\mu$.
It terminates forward searches on vertex $v$ if $\delta[v^+]+ h_s'(v)\geq \mu/2$,
and symmetrically for backward searches.
One can consider the heuristic functions here as a penalty function.
The penalty is zero when the search reaches the midpoint between the source and the target.
Otherwise, the penalty function can prioritize the forward search approaching the target
and, symmetrically, prioritize the backward search approaching the source.
Thus, bidirectional \astar{} is more powerful than regular \bds{}---not only will the two searches effectively meet at the midpoint of the solution,
but some wasteful searches in the wrong direction can also be pruned (see \cref{fig:ppsp}).
}

\section{Batch PPSP Queries in Parallel}\label{sec:framework}

In practical applications, multiple PPSP queries, either independent or sharing common vertices,
can be required in a batch or a complex task. 
A list of possible query types was reviewed in \cref{sec:intro}.
Running multiple queries in a batch can be helpful in two aspects.
First, it can potentially reduce some redundant searches if the set of queries share common vertices.
Second, in the parallel setting, batching multiple queries brings up more work, which
can saturate processors to achieve better parallelism.

Existing work has considered batching multiple PPSP queries in sequential~\cite{knopp2007computing,geisberger2010engineering,geisberger2008contraction}, parallel~\cite{xu2020simgq,xu2022simgq+}, and distributed settings~\cite{mazloumi2020bead,mazloumi2019multilyra}.
Our work differs from them in two aspects.
First, each of them usually only focuses on certain query types, such as pairwise queries~\cite{knopp2007computing,geisberger2010engineering,geisberger2008contraction} or single-source multiple-target (SSMT) queries~\cite{xu2022simgq+}.
Second, most of them are system-level improvements for caching or communication cost by techniques such as relabeling, cache line alignment, and some smart vertex marking;
the backbone is still some existing PPSP algorithms.
In this paper, we focus on algorithmic improvements such as \bds{}.

\hide{
Unlike existing PPSP algorithms that only support single query or some certain types of multiple queries,
\ourlib{} is more general in that it supports \emph{arbitrary queries},
which is achieved by \emph{decomposing} and \emph{aggregating}.
Surprisingly, with these two extra steps, \ourlib{} can naturally fit into our PPSP framework.
\xiaojun{Mention our algorithm is designed for symmetric graphs.}
}

In \ourlib{}, we aim to support \emph{arbitrary batch queries}.
We take a sequence of source-target pairs and compute the PPSP among them.
A key observation here is that the vertices in the queried batch are often highly relevant:
as mentioned in \cref{sec:intro}, these queries may come from
the same complex query, and the source and target can often overlap.
For simple query types such as SSMT, it is easy to reuse the computation since they are all from the same source,
but for more complicated cases, it is non-trivial to do so effectively.
Our goal is to 1) reuse the computation among the source and/or target vertices as much as possible,
2) keep the implementation reasonably simple, and 3) integrate the useful techniques such as \bds{} proposed in \cref{sec:algo}.
In the following, we first discuss how to preprocess the input queries in \cref{sec:multi:preprocess} and present our \bds{}-based algorithms in \cref{sec:multi:bids}.
We also show a baseline algorithm in \cref{sec:multi:sssp}.

\hide{
To do this, we abstract all the queries in a \emp{\querygraph{}} $\gq=(\vq, \eq)$.
Using the structural information of $\gq$, we can understand how the queries overlap with each other and
reuse the search information.
Below, we elaborate on how to construct the \querygraph and how to use it to achieve different approaches based on both SSSP and \bds{}.
}

\hide{
As mentioned in \cref{xxx}, our algorithm focus on symmetric graphs because
1) most real-world road networks are symmetric;
2) our algorithm are more helpful when the graphs are symmetric, since the forward and backward
searches are the same for a certain vertex.}

\subsection{Constructing the \titlecap{\querygraph{}}}\label{sec:multi:preprocess}
One of the advantages of batch PPSP queries is that they can potentially reduce redundant searches.
Therefore, the first step of \ourlib{} is to decompose the queries to identify common vertices in the given queries.
Given a set of queries $(s, t)\in Q$, we first generate a \querygraph{} $G_q=(V_q, E_q)$
such that each query corresponds to an edge in the \querygraph{}.
As presented in~\cref{fig:multi-ppsp}, in the \querygraph{}, each vertex $q\in V_q$ denotes a source or target in the query batch,
and each edge $(s, t)\in E_q$ denotes a single $s$-$t$ query in the batch.
After constructing the \querygraph, a batch of general queries boils down to a group of single-source multiple-target (SSMT) queries
starting from each $q\in V_q$ and their targets are the neighbors in the \querygraph{} (denoted as $N_q(q)$).
The structure \querygraph{} will help \ourlib{} understand the way to reuse the searches in the batch.

For the special queries mentioned in \cref{sec:intro}, each of them has a certain pattern of the \querygraph{}.
For example, a batch of SSMT queries (or symmetrically many-source single-target) forms a star \querygraph{}.
The pairwise query batch forms a complete bipartite graph as the \querygraph{}.
A simple multi-stop query can be represented as a \querygraph{} of a chain.
When each stop comes with multiple options, there may also be forks branching from a chain.
A subset APSP forms a clique \querygraph{} on a subset of vertices.
More generally, an arbitrary batch is just a general graph on all vertices (both sources and targets) involved in the batch.

\begin{figure}
  \centering
  \includegraphics[width=\columnwidth]{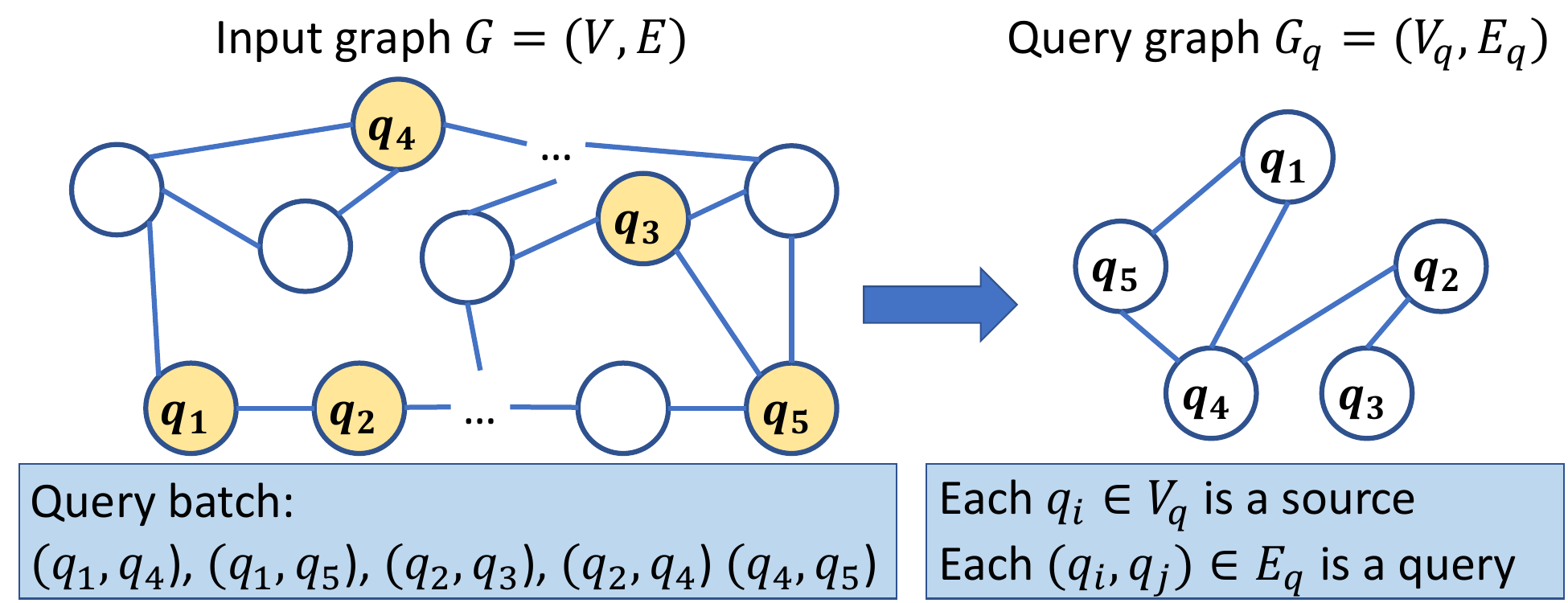}
  \caption{\textbf{Illustration for the query graph construction.}
  }\label{fig:multi-ppsp}
\end{figure}

\subsection{\bds-Based Solutions}\label{sec:multi:bids}
As discussed in \cref{sec:algo}, \bds{} can greatly improve performance in a single PPSP query, so algorithmically, it is interesting to see how to integrate \bds{} in the batch setting.
A straightforward approach is to apply our parallel \bids{} algorithm to each query one-by-one serially,
which we refer to as the \emp{Plain} approach.
To further increase parallelism, we can instead run our parallel \bids{} algorithm on all queries concurrently, which we refer to as the \emp{Plain$^*$} approach.
However, a batch of queries often share common vertices.
After revealing the common vertices, we can reduce the redundant searches originating from the same sources.
More importantly, in \bds{}, a target can also be considered as a source on undirected graphs. 
When each vertex can be either source or target and can be involved in multiple PPSP queries, we need to carefully
redefine the conditions for pruning and updating the answers.
We refer to this approach as the \emp{Multi-\bds{}} (or \emp{multi} for short) approach.
Our solution in \ourlib{} is as follows.
Since a vertex can be reached from multiple sources,
for each vertex $u$ from the original graph,
we maintain a copy of it from each of its neighbors in the \querygraph{} $\gq$.
We use $\vcopy{u}{i}$ to denote that vertex $u$ is reached from $q_i\in \vq$,
and $\delta[\vcopy{u}{i}]$ stores the tentative distance from $q_i$ to $u$ (i.e., $\delta[\vcopy{u}{i}]=\delta_{q_i}[u]$).
Our pruning idea is based on pruning the search from a vertex $q_i\in \vq$ when
it passes the farthest midpoint of $q_i$ and $v\in N_q(q_i)$. 

Although slightly more complicated than a single PPSP query,
\ourlib{} also follows the PPSP framework in \cref{alg:ppsp}.
To effectively make use of the \bds{}, we search from all vertices in the \querygraph{}.
The \initial{} function starts with adding $q_i\in V$ to the frontier,
as all searches start from themselves.
Therefore, they add their own copy, which is $\vcopy{q_i}{i}$
with distance zero, to the frontier.

The problem boils down to setting the \prune{} and \update{} functions correctly,
so we can avoid redundant searches as much as possible.
Specifically, \ourlib{} maintains the tentative distances $\mu[\langle q_i,q_j \rangle]$ for each edge $(q_i, q_j)\in \eq$,
which are also the outputs of our algorithm.
In addition to the best answer for each edge (i.e., query),
our algorithm also maintains a ``search radius'' $\mumax[i]$ for each $q_i\in \vq$,
which indicates that the search starting from the source $q_i$ can be pruned at a vertex $u$ if $\delta[\vcopy{u}{i}] \geq \mumax[i]/2$,
where $\mumax[i]$ can be computed by $\mumax[i]=\max\{\mu[\langle q_i,q_j \rangle]~|~q_j\in \nq(q_i)\}$.
If the pruning condition holds, further search on vertex $v$ will not contribute to the final answer to any queries,
so we can safely skip $\vcopy{v}{i}$.
When a tentative distance is successfully updated,
the \update{} function updates $\mu[\langle \cdot,\cdot \rangle]$ and $\mumax[\cdot]$ accordingly.
For example, if $\vcopy{v}{i}$ is successfully relaxed, any path distance involving $q_i$ can potentially be updated. 
Therefore, for $q_j\in \nq(q_i)$, we try to update $\mu[\langle q_i,q_j \rangle]$ by the path $q_i$--$v$--$q_j$,
which has a distance $\delta[\vcopy{v}{i}]+\delta[\vcopy{v}{j}]$.
If $\delta[\vcopy{v}{i}]+\delta[\vcopy{v}{j}]<\mu[\langle q_i,q_j \rangle]$,
the tentative distance on the edge $(q_i,q_j)$ is updated, 
thus the search radii $\mumax[i]$ and $\mumax[j]$ can potentially be reduced.
Then we update the radius $\mumax[i]$ by $\mumax[i]=\max\{\mu[\langle q_i,q_k \rangle]~|~q_k\in \nq(q_i)\}$ (similarly for $\mumax[j]$).

\hide{
\revise{
\myparagraph{Limitations of Our Algorithm.}
Compared to executing each query sequentially, our algorithm requires more memory since every vertex could have $|V_q|$ copies in the frontier,
which needs to allocate $|V_q|$ times more space for the frontier and distance array.
\xiaojun{We can remove the following discussions since it can be easily resolved.}
In addition, for better data locality, 
we assign an ID to the $i$-th copy of a vertex $u$ using $u\times n + i$ in the distance array.
Since all publicly available real-world graphs contain fewer than $2^{32}$ vertices, we use 32-bit integers for ids. 
However, for extremely large graphs such as COS5 in Table 3 (which has 321M vertices), this approach supports only up to 13 sources. 
To accommodate larger instances (more queries or larger graphs), 64-bit ids would be required.
}}

\hide{Compared to running the queries one at a time,
\ourlib{} can save some work by identifying searches from shared vertices,
and saturate parallelism for sparse or large-diameter graphs. }


\subsection{SSSP-Based Solutions}\label{sec:multi:sssp}
Besides the \bids{}-based solution mentioned above,
one simple way to deal with the batch is to run SSSP from all sources in the batch. 
We call it a \emph{plain} solution based on SSSP.
Although the idea is simple, it remains a reasonably effective solution for cases such as SSMT queries with a moderate number of targets.
In our experiments, even with five targets in an SSMT query, running SSSP on the source may
outperform other highly optimized solutions.

More generally, we are interested in knowing the minimum number of SSSP queries needed to handle the \querygraph{}.
Interestingly, for undirected graphs, this reduces to finding a subset of vertices in the \querygraph{} to cover all edges,
i.e., the \emp{vertex cover (VC)} problem.
Inspired by this, \ourlib{} includes a solution, which we call the \emp{VC} approach:
given the \querygraph{}, \ourlib{} first finds a vertex cover $V'\subseteq \vq$ of it, runs SSSP from
all vertices in $V'$, and combines the results.
We note that vertex cover on an arbitrary graph is NP-hard, and finding an optimal solution for a large \querygraph{} is infeasible.
\ourlib{} finds the VC by enumerating all possibilities for a small \querygraph{} and using a greedy algorithm when $\gq$ is large.
We experimentally verified the efficiency of the VC-based solution, particularly in cases 
where the number of edges in the query graph is much larger than the number of vertices in the vertex cover (VC).

\subsection{Takeaways and Further Discussions}

We present a performance comparison among several approaches for batch PPSP queries in \cref{sec:exp:batch-ppsp}, including
the two \bds{}-based solutions and two SSSP-based solutions.
Among multiple graphs and \querygraph{} patterns, the Multi-\bds{} achieves the best performance in most cases
by effectively leveraging shared information among the batch queries.

By abstracting the \querygraph{}, we can make use of \bds{} and VC and utilize shared information among the query pairs.
Such benefit is more significant on undirected graphs than directed graphs, 
but the idea can simply be extended to directed graphs. 
Note that on directed graphs, the query points will be separated into sources and targets, which effectively forms a bipartite graph.
The same algorithmic idea can just apply.
For Multi-\bds{}, we start searches from all vertices in the bipartite query graph, using the same idea mentioned in \cref{sec:multi:bids}.
The caveat is that we need to access both the outgoing edges and the incoming edges of a vertex, and some preprocessing is needed when reading the input graph.
For the VC-based approach, again assuming we can traverse the graph from both directions, the same idea based on vertex cover also directly applies.
Also, the optimal solution can be computed by bipartite graph matching in cubic work or better.
Then the searches from the selected vertices, either forward or backward, will cover all queries.


\hide{
if the original graph is directed, we need to tackle the search direction for each vertex in $\gq$, whereas
in an undirected graph, we can simply run searches from all vertices in $\gq$.
In the VC-based SSSP approach, if the graph is directed,
covering an edge $(u,v)$ requires either forward search from $u$ or backward search from $v$, which requires the transpose of the graph. 
In our description and experiments, we focus on the undirected setting, where our proposed ideas can provide greater benefits than on directed graphs, but we also include how they can be adapted to directed graphs here.

\revise{
\myparagraph{Adaptations to Directed Graphs.}
Our \bids-based solutions can be easily adapted to directed graphs.
The main difference is that we need to transpose the directed graphs and start two searches for each source ---
one traversing forward edges in the original graph and the other traversing backward edges in the transposed graph.
The remaining steps proceed as in undirected graphs.
However, \bids{}-based methods are more expensive on directed graphs than on undirected ones 
since each source requires two separate searches.

For our VC-based approach, recall that the objective is to select the minimal number of sources 
needed to initiate searches that cover all queries.
This problem can be reduced to the \emph{minimum vertex cover problem on a bipartite graph}.
Specifically, we construct a bipartite graph where the two partitions represent the set of query vertices from the source and the target, respectively.
If a query exists from $u$ to $v$, we add an edge between $u$ in the first partition and $v$ in the second partition.
Since each query must be covered by either a forward or backward search, 
we aim to select the minimum number of vertices to cover all edges. 
This problem can be solved using the famous Hopcroft-Karp algorithm~\cite{hopcroft1973n}.
If a vertex is selected in the first (second) partition, a forward (backward) search is initiated from it. 
Note that some vertices may require both forward and backward searches.
}}


Although bidirectional \astar{} also has good performance in single PPSP queries, 
defining the heuristic functions becomes unclear when a search involves multiple targets.
We leave this as an intriguing future direction to explore. 

\myparagraph{Space Usage.}
For \bids-based approaches, the extra space needed for our Multi-\bds{} algorithm is $O(n\cdot|V_q|)$, 
where $|V_q|$ is the number of \emph{vertices} involved in the batch queries, 
instead of the number of queries in the batch.
For Plain-\bds{}, since we run the query one at a time, the extra space is $O(n)$.
In comparison, Plain$^*$-\bds{} requires $O(n\cdot|E_q|)$ extra space, as we need to maintain the distance array for each query.
For SSSP-based approaches, the extra space is upper-bounded by $O(n\cdot|V_q|)$ for both algorithms, since the algorithm needs to instantiate the searches from at most $|V_q|$ sources.

\myparagraph{Process Extremely Large Batches.}
\ourlib{} has a relatively high space usage if the given batches are very large.
Therefore, for applications that require a huge number (e.g., $\Omega(n)$) of PPSP queries on the same input graph, 
some other methodologies with preprocessing may be a better choice. 
We discuss these ideas in~\cref{sec:related}. 

\hide{
which can be large for a very large query batch.
In this case, we can simply process a subset of queries in turn, or use the SSSP-based solution mentioned below.
We note that when the batch is very large, say $O(n)$, then some shortest-path algorithms with preprocessing (see \cref{sec:related}) can be a better option.
} 

\section{Implementation Optimizations}\label{sec:imp}

\myparagraph{Performance Improvement for \astar{} by Memoization.}
Although applying \astar{} to parallel PPSP is relatively easy,
there is still room for improvement empirically.
For instance, GraphIt~\cite{zhang2020optimizing} concludes that \astar{} is slower than a simple parallel PPSP with ET in many cases.
This is because computing the heuristic function itself could be expensive:
it usually requires computing the geometric distance between two multi-dimensional points,
which incurs extra cache misses in reading the coordinates.
In some cases, the heuristic function is given by spherical distances rather than Euclidean distances (e.g., in real-world road routing),
which can be even more expensive. 
\hide{
From our experiments, we found that preprocessing the heuristics and saving them in an array
can greatly reduce the running time in \astar{}.
However, since the heuristics need to be computed per query,
the preprocessing time can dominate the total running time if the given queries are very close.
}
This cost can be particularly expensive since $h(\cdot)$ is computed in the \prune{} function every time we relax
a vertex, 
and a vertex can be relaxed multiple times throughout the algorithm. 

Note that the heuristic function $h(\cdot)$ remains unchanged once $s$ and $t$ is selected. 
Hence, we can use an array to store the heuristics to avoid recomputation.
A potential challenge here is that we cannot afford to preprocess this value for \emph{all} vertices:
computing the full array takes $O(n)$ cost, but in an effective \astar{} execution,
the number of relaxed vertices should be fewer than $n$.
To overcome this challenge, we use memoization:
when we need the heuristic function for $v$ for the first time,
we compute it on the fly and save the result into the array.
For all subsequent queries that require the heuristic function on $v$,
it can be read directly from the array.
Our experiments (see \cref{fig:memoization}) show that memoization greatly improves the performance for \astar-based solutions.

Since our implementations are based on the stepping algorithms framework~\cite{dong2021efficient},
some optimizations are derived from them.
We discuss these optimizations in \ifconference{the full paper~\cite{dong2025parallelfull}}\iffullversion{\cref{app:imp}}.

\section{Experiments}\label{sec:exp}

\myparagraph{Setup.}
We run experiments on a 96-core machine (192 hyper-threads) with 4 $\times$ 2.1 GHz
Intel Xeon Gold 6252 CPUs (each with 36MB L3 cache) and 1.5TB of main memory.
Our code is implemented in C++, using ParlayLib~\cite{blelloch2020parlaylib}
to support fork-join scheduler and some parallel primitives.
We always use \texttt{numactl -i all} to interleave memory for parallel experiments.
For each source-target pair, the running time is determined by the average of the five runs after a warmup trial.
For each test, we select five source-target pairs in this setting and report the geometric mean for them. 
Running shortest paths on disconnected pairs can be optimized by a single pass of graph connectivity~\cite{dong2023provably,dhulipala2020connectit}.
Thus, we always select sources and targets from the largest connected component.

\begin{table}
  \small
  \centering

\setlength{\tabcolsep}{2.5pt}
\begin{tabular}{ccrrrrcl}
  &       & \multicolumn{1}{c}{\boldmath{}\textbf{$\boldsymbol{n}$}\unboldmath{}} & \multicolumn{1}{c}{\boldmath{}\textbf{$\boldsymbol{m}$}\unboldmath{}} & \multicolumn{1}{c}{\boldmath{}\textbf{$\boldsymbol{D}$}\unboldmath{}} & \multicolumn{1}{c}{\textbf{|LCC|\%}} & \textbf{Heuristics} & \multicolumn{1}{c}{\textbf{Notes}} \\
\midrule
\multirow{4}[2]{*}{\begin{sideways}\textbf{Social}\end{sideways}} & \textbf{OK} & 3.07M & 117M  & 9     & 100.0\% & -     & com-orkut~\cite{yang2015defining} \\
  & \textbf{LJ} & 4.85M & 42.9M & 19    & 99.9\% & -     & soc-LiveJournal1~\cite{backstrom2006group} \\
  & \textbf{TW} & 41.7M & 1.20B & 22    & 100.0\% & -     & Twitter~\cite{kwak2010twitter} \\
  & \textbf{FS} & 65.6M & 1.81B & 37    & 100.0\% & -     & Friendster~\cite{yang2015defining} \\
\midrule
\multirow{2}[2]{*}{\begin{sideways}\textbf{Web}\end{sideways}} & \textbf{IT} & 41.3M & 1.03B & 45    & 100.0\% & -     & it-2004~\cite{boldi2004webgraph} \\
  & \textbf{SD} & 89.2M & 1.94B & 35    & 98.8\% & -     & sd\_arc~\cite{webgraph} \\
\midrule
\multirow{4}[2]{*}{\begin{sideways}\textbf{Road}\end{sideways}} & \textbf{AF} & 33.5M & 44.5M & 3948  & 91.0\% & Spherical & Africa~\cite{roadgraph} \\
  & \textbf{NA} & 87.0M & 110M  & 4835  & 98.9\% & Spherical & North-America~\cite{roadgraph} \\
  & \textbf{AS} & 95.7M & 122M  & 8794  & 94.4\% & Spherical & Asia~\cite{roadgraph} \\
  & \textbf{EU} & 131M  & 166M  & 4410  & 98.9\% & Spherical & Europe~\cite{roadgraph} \\
\midrule
\multirow{4}[2]{*}{\begin{sideways}\boldmath{}\textbf{$k$-NN}\unboldmath{}\end{sideways}} & \textbf{HH5} & 2.05M & 6.50M & 1863  & 81.9\% & Euclidean & Household~\cite{uciml,wang2021geograph} \\
  & \textbf{CH5} & 4.21M & 14.8M & 14479 & 50.8\% & Euclidean & CHEM~\cite{fonollosa2015reservoir,wang2021geograph} \\
  & \textbf{GL5} & 24.9M & 78.7M & 21601 & 49.1\% & Euclidean & GeoLife~\cite{geolife,wang2021geograph} \\
  & \textbf{COS5} & 321M  & 979M  & 1180  & 99.8\% & Euclidean & Cosmo50~\cite{cosmo50,wang2021geograph} \\
\bottomrule
\end{tabular}%

\caption{\textbf{Graphs information.}
``$n$'' $=$ number of vertices.
``$m$'' $=$ number of edges.
``$D$'' $=$ diameter.
``|LCC|\%'' $=$ ratio of the largest connected components.}
\label{tab:graphs}

\end{table}%
We test 14 graphs in four categories, including social networks, web graphs, road networks, and $k$-NN graphs.
The graph information is given in \cref{tab:graphs}.
For social and web graphs, we download them from the references.
Since there are no weights with the original graphs, we generate the weights uniformly at random in the range $[1, 2^{18}]$.
For road networks, we parse the longitude, latitude, and length of the road from OpenStreetMap~\cite{roadgraph}.
Since the coordinates are in the form of longitudes and latitudes, 
we compute the heuristic between two vertices using spherical distance.
For $k$-NN graphs, the original datasets are sets of low-dimensional points.
We use GeoGraph~\cite{wang2021geograph} to generate the corresponding $k$-NN graphs,
where each point is connected to its $k$ nearest neighbors.
We set $k=5$ for all $k$-NN graphs.
The heuristics for all $k$-NN graphs are Euclidean distances.
For both road networks and $k$-NN graphs, we use their original weights and coordinates.
We symmetrize the directed graphs. 
The selected graphs cover a wide range of sizes (from million to billion scale) and diameters (from 10 to $10^4$).
When taking the \emph{average} performance across multiple graphs, we always use the \emp{geometric mean}.

\ifconference{
\myparagraph{Additional Experiments.}
Due to page limits, we provide additional experimental results and analyses in the full version of the paper~\cite{dong2025parallelfull}, 
including comprehensive studies on distance percentiles, scalability, and memoization across all graphs, 
as well as experiments with larger batch sizes for batch PPSP queries.
}

\begin{table*}
  \small
  \centering

\begin{tabular}{cl|cccc|cc|cccc|cccc|cc}
  &       & \multicolumn{4}{c|}{\textbf{Social}} & \multicolumn{2}{c|}{\textbf{Web}} & \multicolumn{4}{c|}{\textbf{Road}} & \multicolumn{4}{c|}{\boldmath{}\textbf{$k$-NN}\unboldmath{}} & \multicolumn{2}{c}{\cellcolor[rgb]{ 1,  1,  0}\textbf{MEAN}} \\
  &       & \textbf{OK} & \textbf{LJ} & \textbf{TW} & \textbf{FS} & \textbf{IT} & \textbf{SD} & \textbf{AF} & \textbf{NA} & \textbf{AS} & \textbf{EU} & \textbf{HH5} & \textbf{CH5} & \textbf{GL5} & \textbf{COS5} & \cellcolor[rgb]{ 1,  1,  0}\textbf{Heur.} & \cellcolor[rgb]{ 1,  1,  0}\textbf{ALL} \\
\midrule
\multirow{9}[2]{*}{\textbf{1-st}} & \textbf{SSSP~\cite{dong2021efficient}} & .092  & .053  & .964  & 2.35  & .463  & 1.89  & .700  & 1.17  & 1.56  & 1.37  & .032  & .243  & .800  & 1.95  & \cellcolor[rgb]{ 1,  1,  0}.618 & \cellcolor[rgb]{ 1,  1,  0}.545 \\
  & \textbf{Ours-ET} & .041  & .021  & .706  & .847  & .105  & .740  & .015  & .031  & .031  & .032  & .003  & .004  & .011  & .040  & \cellcolor[rgb]{ 1,  1,  0}.015 & \cellcolor[rgb]{ 1,  1,  0}.044 \\
  & \textbf{Ours-\bds} & \underline{.008} & \underline{.005} & \underline{.094} & .188  & .138  & \underline{.145} & .010  & .020  & .023  & .016  & \underline{.003} & .003  & .010  & .019  & \cellcolor[rgb]{ 1,  1,  0}.010 & \cellcolor[rgb]{ 1,  1,  0}\underline{.020} \\
  & \textbf{Ours-\astar} & -     & -     & -     & -     & -     & -     & .016  & .020  & .021  & .022  & .004  & .004  & .011  & .008  & \cellcolor[rgb]{ 1,  1,  0}.011 & \cellcolor[rgb]{ 1,  1,  0}- \\
  & \textbf{Ours-\bdastar} & -     & -     & -     & -     & -     & -     & \underline{.009} & \underline{.017} & \underline{.019} & \underline{.011} & .004  & .004  & \underline{.010} & \underline{.004} & \cellcolor[rgb]{ 1,  1,  0}\underline{.008} & \cellcolor[rgb]{ 1,  1,  0}- \\
  & \textbf{GI-ET~\cite{zhang2020optimizing}} & .027  & .016  & .417  & \underline{.167} & \underline{.062} & .401  & .029  & .057  & .062  & .052  & .116  & \underline{.003} & .070  & .058  & \cellcolor[rgb]{ 1,  1,  0}.041 & \cellcolor[rgb]{ 1,  1,  0}.058 \\
  & \textbf{GI-\astar~\cite{zhang2020optimizing}} & -     & -     & -     & -     & -     & -     & .017  & .032  & .032  & .031  & .081  & .003  & .079  & .013  & \cellcolor[rgb]{ 1,  1,  0}.025 & \cellcolor[rgb]{ 1,  1,  0}- \\
  & \textbf{MBQ-ET~\cite{zhang2024multi}} & .037  & .036  & 1.21  & .393  & .340  & .834  & .206  & .392  & .429  & .484  & .035  & .025  & .211  & .178  & \cellcolor[rgb]{ 1,  1,  0}.165 & \cellcolor[rgb]{ 1,  1,  0}- \\
  & \textbf{MBQ-\astar~\cite{zhang2024multi}} & -     & -     & -     & -     & -     & -     & .069  & .111  & .119  & .102  & .031  & .022  & .142  & .030  & \cellcolor[rgb]{ 1,  1,  0}.064 & \cellcolor[rgb]{ 1,  1,  0}- \\
\midrule
\multirow{9}[2]{*}{\textbf{50-th}} & \textbf{SSSP~\cite{dong2021efficient}} & .091  & .052  & .972  & 2.36  & .461  & 1.88  & .701  & 1.17  & 1.57  & 1.36  & .031  & .245  & .795  & 1.92  & \cellcolor[rgb]{ 1,  1,  0}.615 & \cellcolor[rgb]{ 1,  1,  0}.543 \\
  & \textbf{Ours-ET} & .077  & .049  & .878  & 1.94  & .314  & 1.60  & .388  & .538  & .783  & .620  & .024  & .121  & \underline{.291} & 1.08  & \cellcolor[rgb]{ 1,  1,  0}.313 & \cellcolor[rgb]{ 1,  1,  0}.341 \\
  & \textbf{Ours-\bds} & \underline{.075} & \underline{.033} & \underline{.805} & \underline{1.30} & \underline{.223} & \underline{1.08} & \underline{.304} & .315  & .498  & .348  & \underline{.012} & \underline{.089} & .352  & .515  & \cellcolor[rgb]{ 1,  1,  0}.206 & \cellcolor[rgb]{ 1,  1,  0}\underline{.239} \\
  & \textbf{Ours-\astar} & -     & -     & -     & -     & -     & -     & .342  & .241  & \underline{.479} & .303  & .025  & .140  & .416  & .132  & \cellcolor[rgb]{ 1,  1,  0}.197 & \cellcolor[rgb]{ 1,  1,  0}- \\
  & \textbf{Ours-\bdastar} & -     & -     & -     & -     & -     & -     & .426  & \underline{.215} & .570  & \underline{.243} & .014  & .105  & .472  & \underline{.110} & \cellcolor[rgb]{ 1,  1,  0}\underline{.177} & \cellcolor[rgb]{ 1,  1,  0}- \\
  & \textbf{GI-ET~\cite{zhang2020optimizing}} & .149  & .070  & 2.68  & 1.64  & .609  & 3.22  & 1.20  & 1.06  & 1.61  & 1.07  & .114  & .094  & 1.59  & 2.07  & \cellcolor[rgb]{ 1,  1,  0}.725 & \cellcolor[rgb]{ 1,  1,  0}.701 \\
  & \textbf{GI-\astar~\cite{zhang2020optimizing}} & -     & -     & -     & -     & -     & -     & .993  & .479  & 1.10  & .626  & .148  & .155  & 1.48  & .333  & \cellcolor[rgb]{ 1,  1,  0}.497 & \cellcolor[rgb]{ 1,  1,  0}- \\
  & \textbf{MBQ-ET~\cite{zhang2024multi}} & .195  & .191  & 2.34  & 3.85  & 1.18  & 3.45  & t.o.  & t.o.  & t.o.  & t.o.  & .288  & .852  & t.o.  & t.o.  & \cellcolor[rgb]{ 1,  1,  0}- & \cellcolor[rgb]{ 1,  1,  0}- \\
  & \textbf{MBQ-\astar~\cite{zhang2024multi}} & -     & -     & -     & -     & -     & -     & 1.49  & .772  & 1.65  & .727  & .148  & .684  & 6.50  & .529  & \cellcolor[rgb]{ 1,  1,  0}.912 & \cellcolor[rgb]{ 1,  1,  0}- \\
\midrule
\multirow{9}[1]{*}{\textbf{99-th}} & \textbf{SSSP~\cite{dong2021efficient}} & \underline{.090} & \underline{.052} & \underline{.968} & \underline{2.35} & .454  & 1.88  & .704  & 1.17  & 1.57  & 1.35  & .032  & .246  & .797  & 1.94  & \cellcolor[rgb]{ 1,  1,  0}.618 & \cellcolor[rgb]{ 1,  1,  0}.543 \\
  & \textbf{Ours-ET} & .092  & .057  & 1.01  & 2.72  & .448  & 1.92  & .730  & 1.18  & 1.66  & 1.41  & .035  & .244  & .856  & 2.14  & \cellcolor[rgb]{ 1,  1,  0}.648 & \cellcolor[rgb]{ 1,  1,  0}.570 \\
  & \textbf{Ours-\bds} & .095  & .062  & 1.08  & 2.48  & \underline{.416} & \underline{1.77} & \underline{.505} & \underline{.795} & \underline{.961} & \underline{.982} & .030  & \underline{.157} & .535  & 1.14  & \cellcolor[rgb]{ 1,  1,  0}.426 & \cellcolor[rgb]{ 1,  1,  0}\underline{.447} \\
  & \textbf{Ours-\astar} & -     & -     & -     & -     & -     & -     & .704  & 1.04  & 1.59  & 1.21  & .034  & .288  & .387  & .473  & \cellcolor[rgb]{ 1,  1,  0}.473 & \cellcolor[rgb]{ 1,  1,  0}- \\
  & \textbf{Ours-\bdastar} & -     & -     & -     & -     & -     & -     & .858  & .976  & 1.36  & 1.05  & \underline{.021} & .191  & \underline{.152} & \underline{.364} & \cellcolor[rgb]{ 1,  1,  0}\underline{.356} & \cellcolor[rgb]{ 1,  1,  0}- \\
  & \textbf{GI-ET~\cite{zhang2020optimizing}} & .233  & .104  & 3.37  & 2.59  & .739  & 4.38  & 2.17  & 2.30  & 3.51  & 2.46  & .121  & .189  & 5.36  & 3.94  & \cellcolor[rgb]{ 1,  1,  0}1.46 & \cellcolor[rgb]{ 1,  1,  0}1.21 \\
  & \textbf{GI-\astar~\cite{zhang2020optimizing}} & -     & -     & -     & -     & -     & -     & 2.11  & 2.15  & 3.73  & 2.35  & .179  & .298  & 1.75  & 1.34  & \cellcolor[rgb]{ 1,  1,  0}1.22 & \cellcolor[rgb]{ 1,  1,  0}- \\
  & \textbf{MBQ-ET~\cite{zhang2024multi}} & .311  & .245  & 4.27  & 6.38  & 1.43  & 4.94  & t.o.  & t.o.  & t.o.  & t.o.  & .589  & 2.27  & t.o.  & t.o.  & \cellcolor[rgb]{ 1,  1,  0}- & \cellcolor[rgb]{ 1,  1,  0}- \\
  & \textbf{MBQ-\astar~\cite{zhang2024multi}} & -     & -     & -     & -     & -     & -     & 3.12  & 2.66  & 5.02  & 2.30  & .177  & 1.61  & 7.37  & 1.38  & \cellcolor[rgb]{ 1,  1,  0}2.02 & \cellcolor[rgb]{ 1,  1,  0}- \\
\end{tabular}%

\caption{\textbf{Running time (in seconds) on five single-SSSP algorithms.}
Smaller is better. 
$i$-th means the targets are the $i$\% farthest vertex from the sources.
The fastest algorithm on each graph and each percentile is underlined.
The full information of the graphs is given in~\cref{tab:graphs}.
``SSSP'' $=$ Single-source shortest paths~\cite{dong2021efficient}.
``ET'' $=$ Early termination.
``\bds'' $=$ Bidirectional search.
``\astar'' $=$ \astar{} search.
``\bdastar'' $=$ Bidirectional \astar{} search.
``GI'' $=$ GraphIt~\cite{zhang2020optimizing}.
``MBQ'' $=$ Multi Bucket Queues~\cite{zhang2024multi}.
``Mean'' $=$ Geometric means across graphs with heuristics (road and $k$-NN graphs) and all graphs.
``Heur.'' $=$ Heuristic-based graphs.
``-'' $=$ Not applicable.
``t.o.'' $=$ Timeout (exceeding ten seconds).
\astar{} search is not applicable to social and web graphs as they do not have heuristics (coordinates).
}\label{tab:single-query}
\vspace{-1em}
\end{table*}

\subsection{Tested Algorithms}\label{sec:exp:algo}
\ourlib{} is implemented on top of the stepping algorithms framework~\cite{dong2021efficient}, which provides three implementations: $\rho$-stepping, $\Delta^*$-stepping, and Bellman-Ford.
We use $\Delta^*$-stepping (a variant of $\Delta$-stepping~\cite{meyer2003delta}) because it has the best performance on large-diameter graphs,
which are the primary use cases for \astar{} (road and $k$-NN graphs).
Note that \ourlib{} also supports $\rho$-stepping and Bellman-Ford,
and can easily be integrated with other SSSP algorithms.
To find the best parameter~$\Delta$ for each graph, we use the standard approach by starting with a small initial $\Delta$ to run the test
and doubling $\Delta$ until it converges to the minimum geometric mean running time.
For all other baselines using $\Delta$-stepping, we also find and use the best parameter $\Delta$ in the same way. 

\myparagraph{Baselines.}
We compare \ourlib{} with three SOTA baselines on single PPSP.
The first is the original parallel $\Delta^*$-stepping SSSP algorithm from SPAA'21~\cite{dong2021efficient},
which is also the base implementation we build upon. 
The other two are GraphIt from CGO'20~\cite{zhang2020optimizing} and Multi-Bucket Queue (MBQ) from SPAA'24~\cite{zhang2024multi}.
Both of them contain PPSP algorithms with early termination (ET) and \astar{} in their source code and experiments in their papers. 
We directly port them (denoted as GI-ET/GI-A*/MBQ-ET/MBQ-A*) in our experiments.
We select the best $\Delta$ for the two baselines as mentioned above,
and all other settings and parameters remain unchanged from the given sample scripts.
Since MBQ does not support floating-point distances due to bitmasking, 
we round floating-point values to integers in its experiments,
which favors MBQ over other implementations. 


\hide{
The only publicly available parallel (single) PPSP implementation we are aware of is GraphIt~\cite{zhang2020optimizing}.
Unfortunately, in our tests using their open-source code, its PPSP and \astar{} implementations did not achieve satisfactory
performance (can be $10\times$ worse than what they reported in their paper).  We are coordinating with the authors to check the issue.
We note that their PPSP is based on $\Delta$-stepping from GAPBS~\cite{beamer2015gap} with early termination.
Based on the comparison from \cite{dong2021efficient}, $\rho$-stepping is competitive or faster than the $\Delta$-stepping in GAPBS,
which should also be the case when both are added with early termination. 
Our new approaches (\bds{}, \astar{}, and batch-PPSP) are applicable to other parallel SSSP algorithms such as GraphIt,
and the goal here is to compare the different strategies: ET, \bds{}, and \astar{}.
Thus, we focus on the performance of our own algorithms with different strategies in both single and batched PPSP.
}

For batched queries, we did not find publicly available implementations for general batch PPSP, 
so we compare four different approaches in \ourlib{} (all described in \cref{sec:framework}):
1) \textbf{Multi-\bids{}}: a \bds{}-based solution using Multi-\bds{},
2) \textbf{Plain-\bids{}}: a plain \bds{}-based solution running parallel \bds{} for each query one by one serially,
3) \textbf{Plain$^*$-\bids{}}: another plain \bds{}-based solution running parallel \bds{} for all queries simultaneously,
4) \textbf{VC-SSSP}: an SSSP-based solution using VC, and
5) \textbf{Plain-SSSP}: a plain solution running parallel SSSP from all sources.

\subsection{Single PPSP Query}\label{sec:exp:single}
\myparagraph{Overall Performance.}
We evaluate our PPSP algorithms with early termination (ET), bidirectional search (\bds), \astar, and bidirectional A$^*$ (\bdastar). 
We compare them with running SSSP from the source (SSSP),
the PPSP implementations in GraphIt~\cite{zhang2020optimizing}, and MBQ (GI-ET / GI-\astar{} / MBQ-ET / MBQ-\astar{}), 
both of which implement unidirectional search with early termination and \astar{}. 
The results are available in \cref{tab:single-query}, where each row represents an algorithm and each column represents a graph.
Note that the performance of ET, \bds{}, and \astar{}-based solutions is highly dependent on the distance between the two vertices.
Therefore, we evaluate our algorithms in three different distance percentiles (1-st, 50-th, and 99-th).
A query at the $x$-th distance percentile means the target is the $x$\% farthest vertex from the source. 
Since we do not have the coordinates on social and web graphs,
we only run \astar{} and \bdastar{} on road and $k$-NN graphs.

\myparagraph{Comparing with an SSSP Implementation.} We first compare our implementation with the straightforward baseline using SSSP. 
On social and web graphs, both ET and \bds{} always achieve better performance than SSSP
when the given queries are within the 50-th distance percentile.
However, when the distance percentile increases to the 99-th,
SSSP can achieve better performance than ET and \bds{}.
This is because, for the 99-th percentile setting,
both ET and \bds{} are unable to prune many vertices
and thus need to search almost all vertices in the graph.
Pruning only a few vertices cannot mitigate the extra overheads (e.g., maintaining $\mu$ and searching from two directions) compared to SSSP.

On road and $k$-NN graphs with geometric information, we can also compare \astar{} and \bdastar{}.
Comparing the average performance across all graphs, \bdastar{} achieves the \emph{best average performance on all distance percentiles}.
It achieves 76.0$\times$, 3.49$\times$, and 1.73$\times$ speedups compared to SSSP on 1-st, 50-th, and 99-th distance percentiles respectively.
If we only compare \bds{} and \astar{} to ET, we can see that both techniques can effectively reduce the running time in most test cases.
On average, across all graphs, their improvement on different percentiles can be 30--60\%. 
\bds{} is slightly more efficient than \astar{}.
The reason is that \astar{} requires to save the coordinates and compute the heuristics,
which can increase both the I/O and computation cost.
In contrast, \bds{} only needs to initiate two searches, and the other steps are almost the same as in ET. 
Adding distance heuristics by \astar{} is also very effective. 
\astar{} and \bdastar{} achieves up to 252$\times$ and 441$\times$ speedups compared to SSSP at the 1-st percentile, 
and still have up to 4.10$\times$ and 5.32$\times$ speedups at the 99-th percentile.

On road and \knn{} graphs, \bids{} and \astar{} still enable significant speedup over SSSP or ET in many cases, even at 99-percentile.
One reason might be that the SSSP algorithms on social and web graphs are highly optimized,
which leaves little space for further improvement with additional techniques.
For road and \knn{} graphs, which are sparser and inherently more difficult to achieve good parallelism~\cite{dong2021efficient},
using \bids{} or \astar{} still improves performance even within the 99\% closest vertices on many graphs.
In addition, \astar{} and \bdastar{} can further use the coordinates information to prioritize the search direction and improve the performance.
In general, the increasing distance still makes the advantage of \astar{} and \bdastar{} smaller,
and in certain cases, they can be slower than SSSP (i.e., \astar{} on AS and \bdastar{} on AF) on the 99-th percentile.
As discussed above, this is mainly due to the inadequacy of prunes and the extra overhead on the heuristics computations.

\myparagraph{Comparing with SOTA PPSP Implementations.}
We now compare \ourlib{} with GraphIt~\cite{zhang2020optimizing} and MBQ~\cite{zhang2024multi}. 
We first compare the standard PPSP query without geometric information (thus \astar{} is not applicable). 
GraphIt is almost always faster than MBQ, except for two cases,
and therefore we focus on comparing with GraphIt below. 
For ET, on social and web graphs, GI-ET has slightly better performance at the 1-st percentile,
while MBQ-ET and Ours-ET are comparable.
When the queried vertices are farther, Ours-ET becomes more efficient. 
On road and $k$-NN graphs, Ours-ET is almost always faster than GI-ET except for one graph, CH5,
where it remains competitive within 20\%.
On the geometric mean across all graphs, 
our ET is 1.33$\times$, 2.05$\times$, and 2.12$\times$ faster than GraphIt on the three distance percentiles, respectively. 
\ourlib{} further accelerates the queries by bidirectional search, leading to a final speedup of 2.8$\times$ compared to GI-ET on average
and 6.8$\times$ (or more, considering the timeout cases) faster compared to MBQ-ET.

We now compare the \astar{} performance across \ourlib{}, GraphIt, and MBQ. 
On the geometric mean across all road and $k$-NN graphs, 
Our \astar{} is 2.26$\times$, 2.52$\times$, and 2.59$\times$ faster than GraphIt on the three distance percentiles
and is 5.75$\times$, 4.63$\times$, and 4.28$\times$ faster than MBQ.
Part of the performance gain comes from memoization techniques, which reduce redundant heuristic computations.
With further integrating the bidirectional search, \ourlib{} finally achieves 4.4$\times$ speedup over GraphIt on average,
and is 6.2$\times$ faster than MBQ. 

\hide{We note that \ourlib{} is also more flexible in that we provide the \bids{} and \bdastar{} implementations,
whereas GraphIt and MBQ do not. 
Since our bidirectional search is effective in accelerating PPSP queries, 
Ours-\bids{} and Ours-\bdastar{} further outperform the baselines.}

\hide{
Another interesting finding is that when the queries are in 1-st percentile,
the advantage of PPSP algorithms (ET, \bids{}, and \astar{}) over SSSP is more significant 
on road and $k$-NN graphs than on social and web graphs.
It is likely because both social and web graphs contain many high-degree vertices and have small diameters.
Once these high-degree vertices are scanned, they inevitably add a large number of neighbors to the frontier.
Therefore, even for settling 1\% closest vertices, the total number of visited vertices (and thus the cost) can still be high.
This can also be seen from \cref{fig:distance}: the cost of PPSP algorithms on road and \knn{} graphs increases smoothly with the distance,
but for social and web graphs, the cost may be very close to a complete SSSP from a very early stage.
\hide{
Moreover, smaller diameters mean fewer rounds to explore the frontier,
and fewer synchronizations are enforced when $\mu$ is updated.
\xiaojun{Reword it. Fewer synchronizations can cause the algorithm to update $\mu$ more in a concurrent way.
So it is possible that $\mu$ decreases to a smaller value after we scan a vertex $v$ with $\delta[v] < \mu$.}
For road and $k$-NN graphs, the diameters are large and the maximum degrees of the graphs are small,
so they are unlikely to be affected by the aforementioned factors.
}
}

\begin{figure*}
  \small
  \centering
  \includegraphics[width=\textwidth]{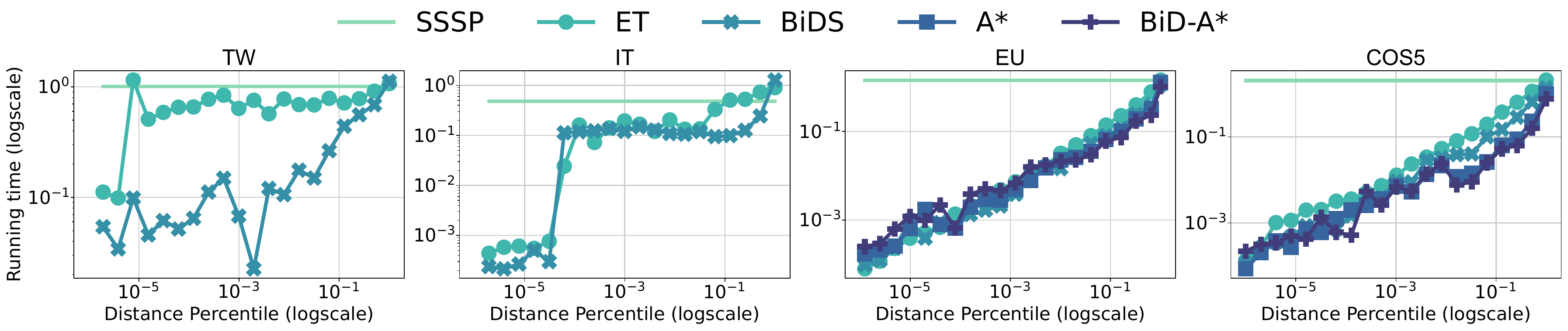}
  \caption{\textbf{Running time vs. distance percentile for single PPSP algorithms on four representative graphs from each category.} Lower is better.}
  \label{fig:distance}
\end{figure*}

\begin{figure*}
  \small
  \centering
  \includegraphics[width=\textwidth]{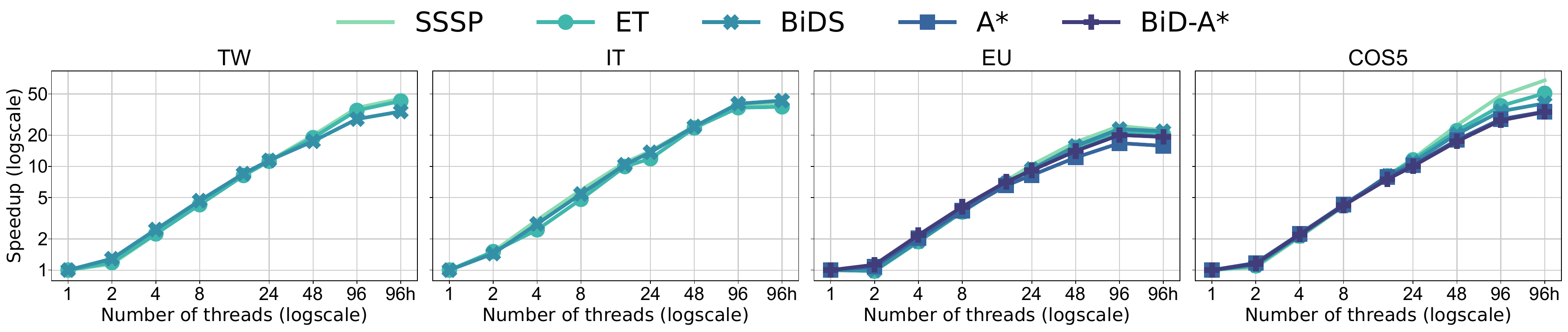}
  \caption{\textbf{Self-relative speedups of our single PPSP algorithms on four representative graphs from each category.} Higher is better.}
  \label{fig:scalability}
\end{figure*}

\begin{figure}
  \small
  \centering
  \includegraphics[width=\columnwidth]{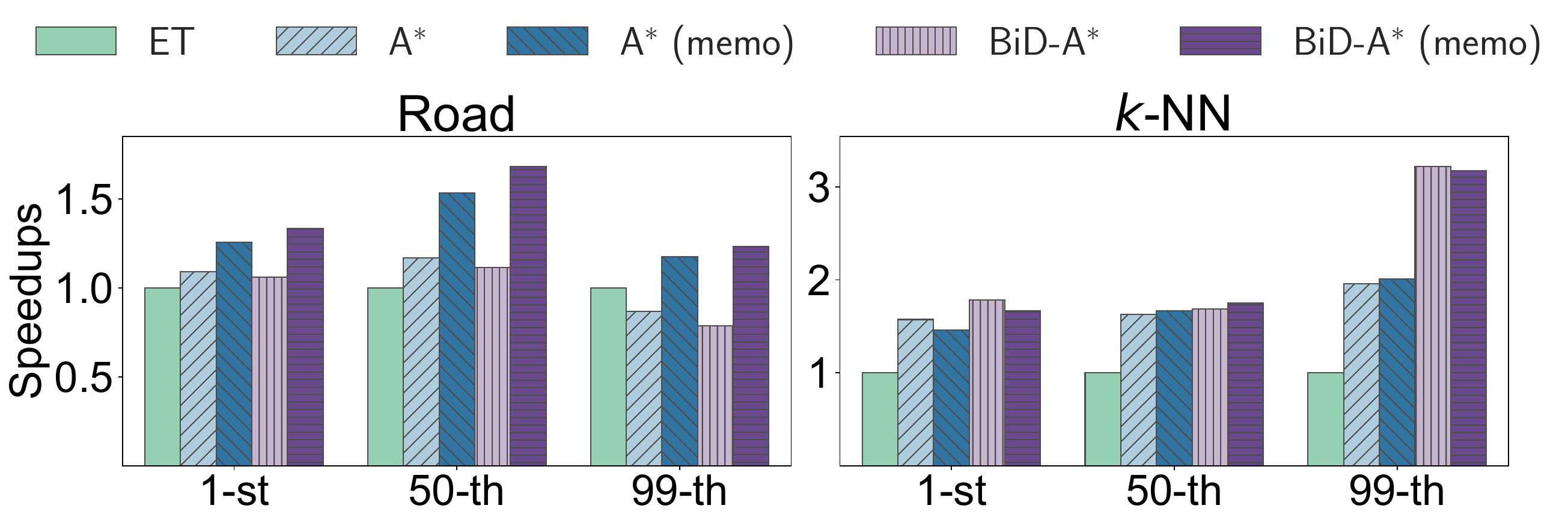}
  \caption{\textbf{Relative performance over ET with and without memoization on road and $k$-NN graphs.} 
  Numbers are running times normalized to ET. 
  ET is always one.
  Higher is better.
  We take the geometric mean on all road and $k$-NN graphs.
  ``memo'' denotes memoization.
  \label{fig:memoization}}
\end{figure}

\myparagraph{The Study on Different Distance Percentiles.}
PPSP usually performs faster than SSSP.
However, the overheads of maintaining the pruning conditions and searching bidirectionally
become more visible when the queried pairs are farther.
Therefore, we provide an in-depth analysis of how the running time scales with fine-grained distance percentiles.
We pick a random source $s$ from the largest connected component
and select the targets based on the distances from $s$.
The first target is the 10-th closest vertex from $s$,
and we double the distance rank for every subsequent target (20-th, 40-th, etc.)
until we reach the farthest vertex.
We select a representative graph from each category and present the results in \cref{fig:distance}.
Results on all graphs are in \ifconference{the full paper~\cite{dong2025parallelfull}.}\iffullversion{\cref{fig:distance-full}.}

On social and web graphs, ET is not very effective.
Both ET and \bids{} can be slower than simply running SSSP when the queried vertices are far.
This is because these graphs have many high-degree vertices.
Once the PPSP search expands on these vertices, 
they can significantly increase the search space
and thus increase the running time.
Nevertheless, \bids{} still achieves about 10$\times$ speedup when the queried vertices are below the 1-st distance percentile.

\hide{
As discussed above, querying a farther target may involve some high-degree vertices in between
and thus substantially increase the running time.
On TW, as the distance percentile increases to 50\%,
ET has almost the same performance as in SSSP,
while \bds{} can still achieve a 92\% speedup.
When the distance percentile increases to 100\% (i.e., the target is the farthest vertex),
both ET and \bds{} have a 6\% slowdown over SSSP.
The result is similar on SD---PPSP has better performance until the 50-th distance percentile,
and is 5--7\% slower than SSSP when querying the farthest vertex.
}

On road and $k$-NN graphs, the running time grows almost linearly with increasing distance percentiles.
Both \bds{} and \astar{} can improve the performance compared to ET.
When the queried vertices are close (below 1-st percentile), \bds{} has the best performance in general.
When the distances are above the 1-st percentile threshold, \bdastar{} becomes the best.
This is because when the vertices are far enough, \bdastar{} can guide the search towards the midpoint more precisely
and thus can effectively prune more redundant searches in the wrong direction than \bids{}.

\myparagraph{Self-Relative Speedups.}
To measure parallelism, we present the self-relative scalability curves on one representative graph from each graph category in~\cref{fig:scalability}.
On the four representative graphs, all of our algorithms, as well as the SSSP implementation from the previous work,
have good scalability, achieving 31.0--67.8$\times$ speedups.
Interestingly, the simpler the algorithms are, the better scalability they have,
as plain algorithms have more work due to disoriented search
and thus can better saturate the processors with sufficient work.
Algorithms with more advanced techniques, such as \bids{}, skip many vertices with the heuristics,
which leads to less load balance and lower parallelism.
The full results on each graph are available in \iffullversion{\cref{fig:scalability-full}.}\ifconference{the full paper~\cite{dong2025parallelfull}.}

\myparagraph{Performance Study on Memoization.}
We test how memoization helps reduce the cost in \astar{}. 
We perform the same test as in \cref{tab:single-query} with and without memoization. 
In \cref{fig:memoization}, we show the average performance on road networks and \knn{} graphs, respectively.
All numbers are normalized to ET. 
The full results on each graph are presented in \iffullversion{\cref{fig:memoization-full}.}\ifconference{the full paper~\cite{dong2025parallelfull}.}

Without memoization, \astar{} and \bdastar{} can be slower than ET,
which is consistent with the findings in previous work.
Using memoization, the running time is reduced by 15\% on average.
Saving and reusing the heuristics greatly mitigate the overhead on the additional computations.
Memoization is more useful on road networks than in $k$-NN graphs,
which is due to the more complicated heuristic computations for spherical distances in road graphs than the Euclidean distances in $k$-NN graphs.
Memoization improves \astar{} and \bdastar{} by up to 1.74$\times$ on road graphs.

\begin{figure*}
  \centering
  \includegraphics[width=\textwidth]{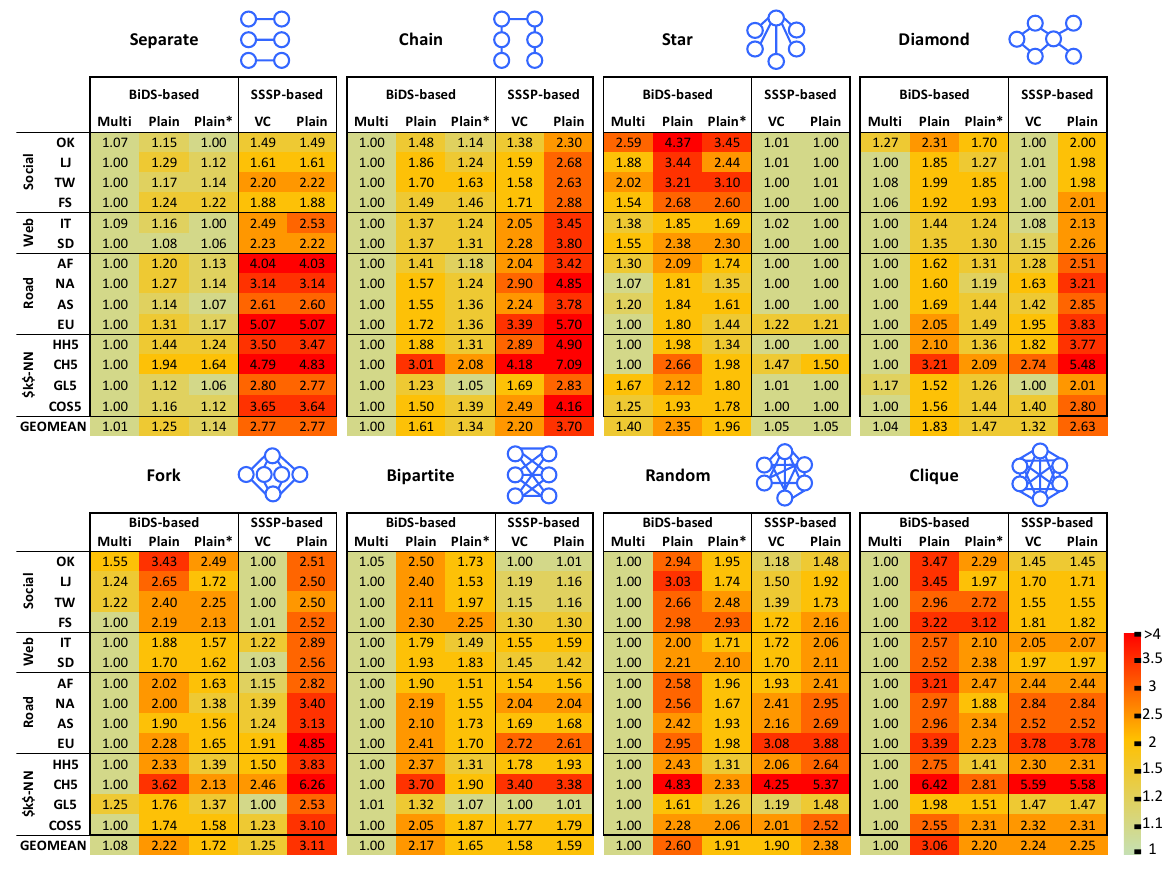}
  \caption{\textbf{Heatmap on batch PPSP queries.}
  The numbers are the running times normalized to the fastest algorithm on each test instance.
  Smaller or green is better.
  ``GEOMEAN'' $=$ geometric means on all graphs.
  We run two \bds{}-based algorithms and two SSSP-based algorithms as introduced in \cref{sec:framework}.
  For \bds{}-based algorithms, Multi is to evaluate all queries in a batch,
  Plain is to runs each query with our parallel BiDS algorithm one at a time,
  and Plain$^*$ is to run all queries simultaneously in parallel, and each query is a parallel BiDS algorithm.
  For SSSP-based algorithms, VC is to run the SSSP algorithm from the minimum required sources induced by the vertex cover of the \querygraph,
  and Plain is to run the SSSP algorithm from all given sources.}
  \label{fig:multi-ppsp}
\end{figure*}

\subsection{Batch PPSP Queries}\label{sec:exp:batch-ppsp}

We discuss the tested batch PPSP algorithms in~\cref{sec:exp:algo} and present our experiments in~\cref{fig:multi-ppsp}.
We generate eight types of batch queries to simulate different use cases, as discussed in \cref{sec:intro,sec:framework}.
Their \querygraph{s} are outlined in \cref{fig:multi-ppsp}.
The numbers in \cref{fig:multi-ppsp} represent relative running time normalized to the fastest approach on each graph and each \querygraph{}.
To compare across queries,
we fix the number of sources to six
and vary only the queries between them.
The tested algorithms are discussed in \cref{sec:exp:algo}.
Note that since the vertices are randomly picked, the expected distance is at about the 50-th percentile.

Overall, our Multi-\bds{} \emph{achieves the best performance on about 80\% of the tests}.
Even in the worst case (star), Multi-\bds{} is within 50\% slower than the fastest baseline on average.
The speedups of Multi-\bds{} over other baselines become larger
when the \querygraph{} becomes denser (i.e., more sources are shared in the queries).
For example, on cliques, 
Multi-\bds{} is up to 2.81$\times$ faster (on CH5) even compared to the best baseline, and 2.20$\times$ on average.
Similarly, on other \querygraph{} patterns where each vertex may be involved in more queries,
such as chain, bipartite, and random, Multi-\bds{} have an overwhelming advantage over the others.

Even on separate queries (three disjoint $s$-$t$ pairs) where no sources are shared among different queries,
Multi-\bds{} still benefits from running all queries in parallel.
Compared to Plain-\bds{}, Multi-\bds{} achieves competitive performance on social and web graphs
and achieves up to 94\% speedups on road and $k$-NN graphs.
This indicates another benefit of using Multi-\bds{} on the batch:
SSSP/PPSP algorithms usually suffer from insufficient parallelism because
road and $k$-NN graphs are sparse and have large diameters.
Batching the queries can naturally saturate the processors with sufficient jobs.
Compared to Plain$^*$-\bds{}, Multi-\bds{} achieves a 14\% speedup on average,
which we believe is due to better cache access patterns from collocating the queries.
On average, Multi-\bds{} is 25\% faster than Plain-\bds{} and 14\% faster than Plain$^*$-\bds{}.

On other \querygraph{} patterns where the queries do not share vertex heavily, other approaches can be favorable.
For example, on star queries, only one SSSP is needed for SSSP-based algorithms,
while Multi-\bds{} needs to search from all six sources.
On social and web graphs, the advantage of running from only one source is obvious.
However, on road and $k$-NN graphs, Multi-\bds{} can still achieve comparable or even better performance (EU and CH5).
The reason is that on these graphs, \bds{} is already about 2.75-3.91$\times$ faster than SSSP with a single query (see \cref{tab:single-query}),
and running the queries in a batch also diminishes some redundant searches.

More generally, we can observe that a VC-based SSSP solution is almost always better (up to 2.55$\times$ times faster) 
than a plain SSSP solution
since the number of SSSPs running by VC is no more than the plain SSSP solution.
As expected, on graphs where a small number of vertices can cover all edges, such as star and fork graphs, 
the VC-based solution can even outperform the solution based on Multi-\bds{}, especially on social and web graphs.
As mentioned, this is because SSSP on these graphs is highly optimized,
which makes the additional work on maintaining \bds{} more significant.
Therefore, running one or two SSSPs may be cheaper than running \bids{} from all six vertices.

\section{Related Work}\label{sec:related}

\myparagraph{Existing Parallel Algorithms for \bids{} and Batch PPSP.}
Although \bids{} is shown to be effective in accelerating sequential PPSP queries, 
we are unaware of any fully parallelized \bids{} algorithm throughout the query phase.
Some existing works~\cite{rios2011pnba,vaira2011parallel} parallelize bidirectional Dijkstra with two threads,
one for the forward search and one for the backward search,
which naturally does not scale to more than two threads.
PnP~\cite{xu2019pnp} 
uses bidirectional searches in preprocessing to predict the more efficient direction (from either the source or the target) to perform the search. 
Once predicted, it continues the search in only one direction, either forward or backward.

There have also been studies on batch PPSP queries~\cite{knopp2007computing,geisberger2010engineering,mazloumi2019multilyra}.
Some of them~\cite{knopp2007computing,geisberger2010engineering} support only a specific type of query, such as SSMT or subset APSP, and do not generalize to arbitrary batches.
Multilyra~\cite{mazloumi2019multilyra} studies batch PPSP queries in the distributed system setting,
with the goal of processing a batch of queries at a time and thus amortizing the communication cost.
However, it does not consider the connections between the given queries.

\myparagraph{Other Algorithms for Computing Shortest Paths.}
We note that there is a large body of work on efficiently computing shortest paths (mostly in the sequential setting), and we refer the audience to an excellent survey~\cite{bast2016route}.
Most other techniques require some preprocessing to accelerate query time.
Some famous ones include contraction hierarchies (CH)~\cite{geisberger2012exact}, transit nodes routing~\cite{bast2007fast}, pruned landmark labeling (PLL)~\cite{akiba2013fast}, and ALT (to accelerate \astar{})~\cite{goldberg2005computing}.
Most of these ideas are orthogonal to the techniques studied in this paper: \bds{}, \astar{}, and batch PPSP, which do not require preprocessing.
We should note that preprocessing in shortest-path algorithms is double-edged---while queries can be significantly accelerated, the preprocessing can also take much time, and sometimes much more space as well (e.g., PLL).
The preprocessing-free methodology studied in this paper is potentially preferred in certain scenarios, such as
when fewer total queries are performed, graphs are larger (so less auxiliary space is available), 
and/or graphs change frequently. 


\section{Conclusions}\label{sec:concl} 

\hide{Due to the lack of strict visiting order (Dijkstra's order or best-first order), it has remained a long-standing challenge in how to integrating the algorithmic techniques such as bidirectional search (\bds{}) and \astar{} to the parallel setting.}
In this paper, we systematically study parallel point-to-point shortest paths (PPSP). 
We propose \ourlib{}, a parallel library for PPSP with multiple algorithms for both single and batch PPSP queries. 
Algorithmically, we propose a list of techniques to 
integrate bidirectional search (\bds{}) and \astar{} into existing parallel SSSP algorithms to achieve simple implementation with high parallelism. 
\ourlib{} supports \bds{}, \astar{}, and their combination \bdastar{}. 
We unify the methodologies for all of them in a PPSP framework in~\cref{alg:ppsp} by plugging in three user-defined functions.
Using the framework, all these algorithms are simple, easy to program, and have high performance. 

For batch queries, we give a novel abstraction based on the \querygraph{}. 
This abstraction allows the algorithmic ideas to be studied independently of the query types. 
We show how to use vertex cover and our \bds{} technique to support arbitrary query batches, 
which can be especially efficient when the queries share vertices.  
Our Multi-\bds{} for batch PPSP can also be implemented by our PPSP framework in a simple way. 

We conducted a thorough experimental study and showed that with the careful algorithmic design, both \bids{} and \astar{} indeed lead to significant improvements in the parallel setting. 
Namely, \bds{} and \astar{} can consistently improve the PPSP performance, and \bdastar{}, which combines their advantages, 
usually gives the best overall performance, which aligns with the observations in the sequential setting. 
Our \bids{}-based solutions outperform state-of-the-art unidirectional PPSP solutions such as GraphIt~\cite{zhang2020optimizing} and MBQ~\cite{zhang2024multi}. 
In the batch PPSP setting, our proposed techniques effectively reuse shared information in the batch and consistently achieve better performance than the plain implementations. 

\begin{acks}
This work is supported by NSF grants CCF-2103483, IIS-2227669, and TI-2346223, NSF CAREER Awards CCF-2238358 and CCF-2339310, the UCR Regents Faculty Development Award, and the Google Research Scholar Program.
We thank the anonymous reviewers for their useful comments.
\end{acks}


\iffullversion{
\appendix
\section{Proof for \cref{thm:bdastar-correctness}}

We present the proof for \cref{thm:bdastar-correctness}, which states that our \bdastar{} algorithm can correctly compute the shortest distance between the source $s$ and target $t$. 
\vspace{10pt}

\begin{proof}
  Recall that $\delta[v]$ is the tentative distance and finally converges to the true distance $d(v)$, so we have $\delta[v]\geq d(v)$.
  Assume the true distance between $s$ and $t$ is $\mu'=d(s,t)$.
  Note that when $\mu$ has reached $\mu'$ (i.e., $\mu=\mu'$) in the algorithm, then the shortest distance $d(s,t)$ has been computed.

  Hence, we now focus on the case when $\mu'<\mu$.
  As in the proof of \cref{thm:bids-correctness}, assume one of the shortest paths between $s$ and $t$ is
  $P=\{v_0=s, v_1, \ldots, v_k=t\}$.
  For every vertex $v_i\in P$, let
  \[
  \begin{aligned}
    F(v_i) &= d(s,v_i)+h(v_i), \\
    B(v_i) &= d(v_i,t)-h(v_i).
  \end{aligned}
  \]
  which are the true-distance analogues of the forward and backward pruning terms.
  Clearly, for all $v_i\in P$, we have
  \[
    F(v_i)+B(v_i)=d(s,v_i)+d(v_i,t)=\mu'.
  \]

  By consistency of $h(\cdot)$, for every edge $(v_i,v_{i+1})$ on $P$,
  \[
    h(v_i)\leq w(v_i,v_{i+1})+h(v_{i+1}),
  \]
  and therefore
  \[
    F(v_i)=d(s,v_i)+h(v_i)\leq d(s,v_{i+1})+h(v_{i+1})=F(v_{i+1}).
  \]
  Similarly,
  \[
    B(v_i)=d(v_i,t)-h(v_i)\geq d(v_{i+1},t)-h(v_{i+1})=B(v_{i+1}).
  \]
  Thus, along the shortest path $P$, the forward term $F(\cdot)$ is nondecreasing and the backward term $B(\cdot)$ is nonincreasing.

  Let $m$ be the largest index such that $F(v_m)<\mu/2$.
  Under a general consistent heuristic, this index may be $-1$ if no such vertex exists, or $k$ if all vertices satisfy the condition.
  Then, for every $0\leq i\leq m$, we have $F(v_i)\leq F(v_m)<\mu/2$, so these vertices are not pruned by the forward search once their true distances are obtained.
  For every $m+1\leq i\leq k$, we have $F(v_i)\geq \mu/2$, and thus
  \[
    B(v_i)=\mu'-F(v_i)\leq \mu'-\mu/2<\mu/2,
  \]
  so these vertices are not pruned by the backward search once their true distances are obtained.

  We now prove that the true distances on the path needed to recover $\mu'$ will be computed.
  In the forward search, $s=v_0$ is the source with $\delta[\vcopy{s}{+}]=d(s,s)=0$.
  If $m\geq 0$, then by induction on $i$, each vertex $v_i$ for $0\leq i\leq m$ is eventually relaxed from $v_{i-1}$ to its true distance $d(s,v_i)$.
  Since $F(v_i)<\mu/2$, it is not pruned when this true distance is obtained, and so the forward search can continue along the path.
  Similarly, if $m+1\leq k$, then by induction from $t=v_k$ backward, each vertex $v_i$ for $m+1\leq i\leq k$ is eventually relaxed to its true distance $d(v_i,t)$ in the backward search.
  Since $B(v_i)<\mu/2$, it is not pruned when this true distance is obtained, and so the backward search can continue along the path.

  If $m=k$, then the forward search reaches $t$ and the $\update{}$ function sets
  $\mu=\delta[\vcopy{t}{+}]+\delta[\vcopy{t}{-}]=d(s,t)=\mu'$.
  Symmetrically, if $m=-1$, then the backward search reaches $s$ and updates $\mu$ to $\mu'$.
  Otherwise, the forward search reaches $v_m$ with the true distance $d(s,v_m)$, and the backward search reaches $v_{m+1}$ with the true distance $d(v_{m+1},t)$.
  Since $(v_m,v_{m+1})$ is an edge on the shortest path, eventually one of these two searches relaxes across this edge after the other direction has already obtained its true distance.
  When this later relaxation happens, the $\update{}$ function uses either
  \[
    \delta[\vcopy{v_m}{+}]+\delta[\vcopy{v_m}{-}]=d(s,v_m)+d(v_m,t)=\mu'
  \]
  or
  \[
    \delta[\vcopy{v_{m+1}}{+}]+\delta[\vcopy{v_{m+1}}{-}]=d(s,v_{m+1})+d(v_{m+1},t)=\mu'.
  \]
  Hence, the return value $\mu$ is eventually updated to the true distance $\mu'$.

  Therefore, our \bdastar{} algorithm correctly computes the shortest distance between $s$ and $t$.
\end{proof}

\section{Other Implementation Optimizations}\label{app:imp}
\myparagraph{Sparse-Dense Optimization.}
We use the sparse-dense optimization as in many other systems~\cite{shun2013ligra,beamer2015gap}.
When the frontier size is small (sparse mode), we use a parallel hash bag~\cite{wang2023parallel,dong2024pasgal}
to maintain the frontier as in existing work~\cite{liu2025parallel,wan2025parallel}, 
which is a parallel data structure for a set that supports inserting and resizing. 
Otherwise, when the frontier size reaches a constant fraction of $n$ (dense mode),
we use a boolean array of size $n$ to indicate if a vertex is in the frontier or not.
\hide{Although we need to check an array of size $n$ in dense mode,
since the frontier size is already a constant fraction of $n$,
it does not change the overall complexity.}
In practice, it can be more efficient than using a parallel hash bag when the frontier is large
because maintaining an array of boolean flags incurs much smaller overhead and is more cache-friendly.

\myparagraph{Bidirectional Relaxation Optimization.}
The direction optimization is introduced in~\cite{dong2021efficient} for SSSP.
Similar ideas are also studied for BFS (referred to as the pull-push optimization)~\cite{Beamer12}.
If the input graph is undirected, when the algorithm is about to relax the neighbors of a vertex $u$,
it first relaxes $u$ using its neighbors $v\in N(u)$ (i.e., check $\delta[v]+w(v, u) < \delta[u]$).
It aims to get the best possible tentative distance for $u$ before it relaxes other vertices.
Since the neighbor lists are likely to reside in the cache,
doing the bidirectional relaxation only incurs little overhead.

\myparagraph{Optimization for Disconnected Queries.}
In the extreme case that the given query $(s,t)$ is disconnected,
\bds{} needs to search both connected components that $s$ and $t$ are in,
which can be much slower than a unidirectional search
if the sizes of the two connected components differ significantly.
In order to reduce the overhead as much as possible,
our algorithm checks the direction of the vertices in the frontier before each round.
If all the vertices are from either $s$ or $t$ and $\mu$ is never updated,
the algorithm can terminate right away, returning $\infty$ as the distance.

\section{More Experimental Results}\label{app:exp}
For completeness, we present the full experimental results on each graph in this section.

\myparagraph{Large Batch Sizes on Batch PPSP.}
We evaluate our batch PPSP algorithms on larger query batches in~\cref{fig:multi-ppsp-large}.
We fix the number of query sources to 30 and increase the number of query pairs from 100 to 300.
Overall, our Multi-BiDS algorithm consistently achieves the best geometric mean running time,
outperforming the next best baseline by up to 1.32$\times$.
As the number of query pairs increases, the performance gap between Multi-BiDS and the other algorithms becomes even more pronounced.

\myparagraph{In-Depth Analysis on Different Distance Percentiles.}
We study the relationship between the running time and the distance between the source and target
and presented the full results in~\cref{fig:distance-full}.
The conclusion is similar to the representative graphs in the main paper.

\myparagraph{Self-Relative Speedups.}
We presented the scalability curves on each graph in~\cref{fig:scalability-full}.
As discussed in the main paper, using more advanced techniques will relatively reduce the amount of work and increase load imbalance,
as some of the vertices in the frontier are pruned from the searches.
Therefore, plain techniques can have better speedups.
On most of the graphs, our algorithms achieve good scalability.
The only outlier is CH5, on which the speedups are less than ten.
This is because the edge weights are skewed on CH5, which makes the original SSSP algorithms not scalable.
In future work, we plan to study shortest-path algorithms on graphs with skewed weights.

\myparagraph{Performance Study on Memoization.}
We presented the full results on the memoization technique in~\cref{fig:memoization-full}.
Using memoization on road graphs is more beneficial than on $k$-NN graphs, 
as road graphs use spherical distance as a heuristic,
which is more complex to compute than the Euclidean distance used in $k$-NN graphs.

\begin{figure*}[htp]
  \centering
  \includegraphics[width=\textwidth]{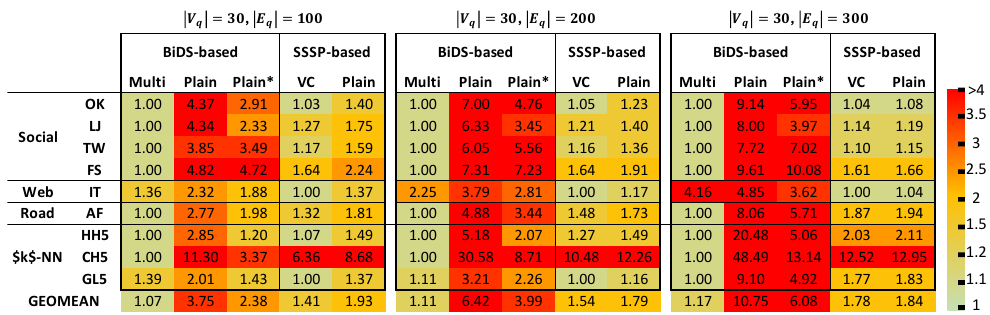}
  \caption{\textbf{Heatmap on large batch PPSP queries.}
  The numbers are the running times normalized to the fastest algorithm on each test instance.
  Smaller or green is better.
  ``GEOMEAN'' $=$ geometric means on all graphs.
  We run two \bds{}-based algorithms and two SSSP-based algorithms as introduced in \cref{sec:framework}.
  For \bds{}-based algorithms, Multi is to evaluate all queries in a batch,
  Plain is to runs each query with our parallel BiDS algorithm one at a time,
  and Plain$^*$ is to run all queries simultaneously in parallel, and each query is a parallel BiDS algorithm.
  For SSSP-based algorithms, VC is to run the SSSP algorithm from the minimum required sources induced by the vertex cover of the \querygraph,
  and Plain is to run the SSSP algorithm from all given sources.}
  \label{fig:multi-ppsp-large}
\end{figure*}

\begin{figure*}
  \small
  \centering
  \includegraphics[width=\textwidth]{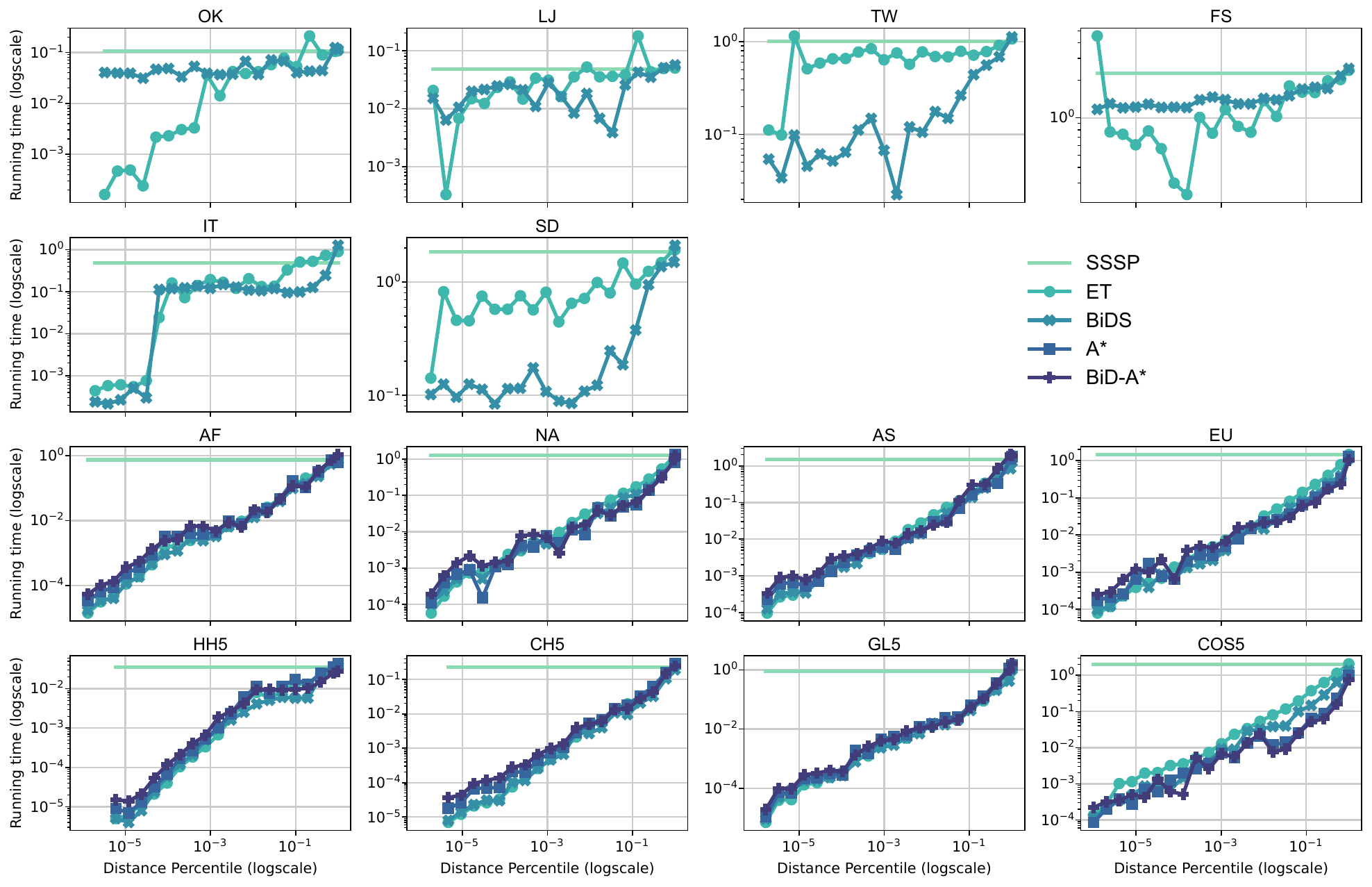}
  \caption{\textbf{Running time vs. distance percentile for single PPSP algorithms on each graph.} Lower is better.}
  \label{fig:distance-full}
\end{figure*}

\begin{figure*}
  \small
  \centering
  \includegraphics[width=\textwidth]{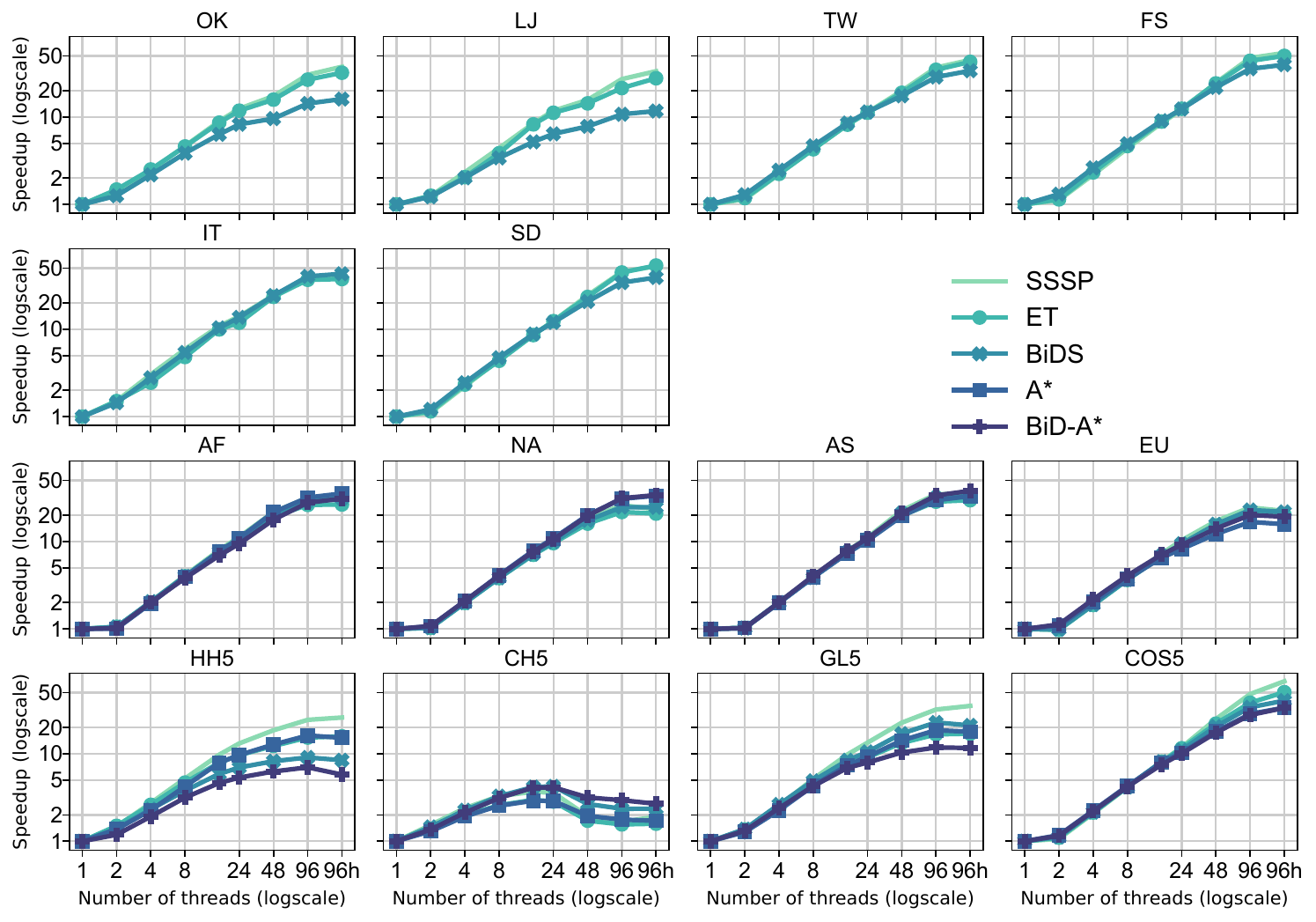}
  \caption{\textbf{Self-relative speedups of our single PPSP algorithms on each graph.}
  Higher is better.
  \label{fig:scalability-full}}
\end{figure*}

\begin{figure*}
  \small
  \centering
  \includegraphics[width=\textwidth]{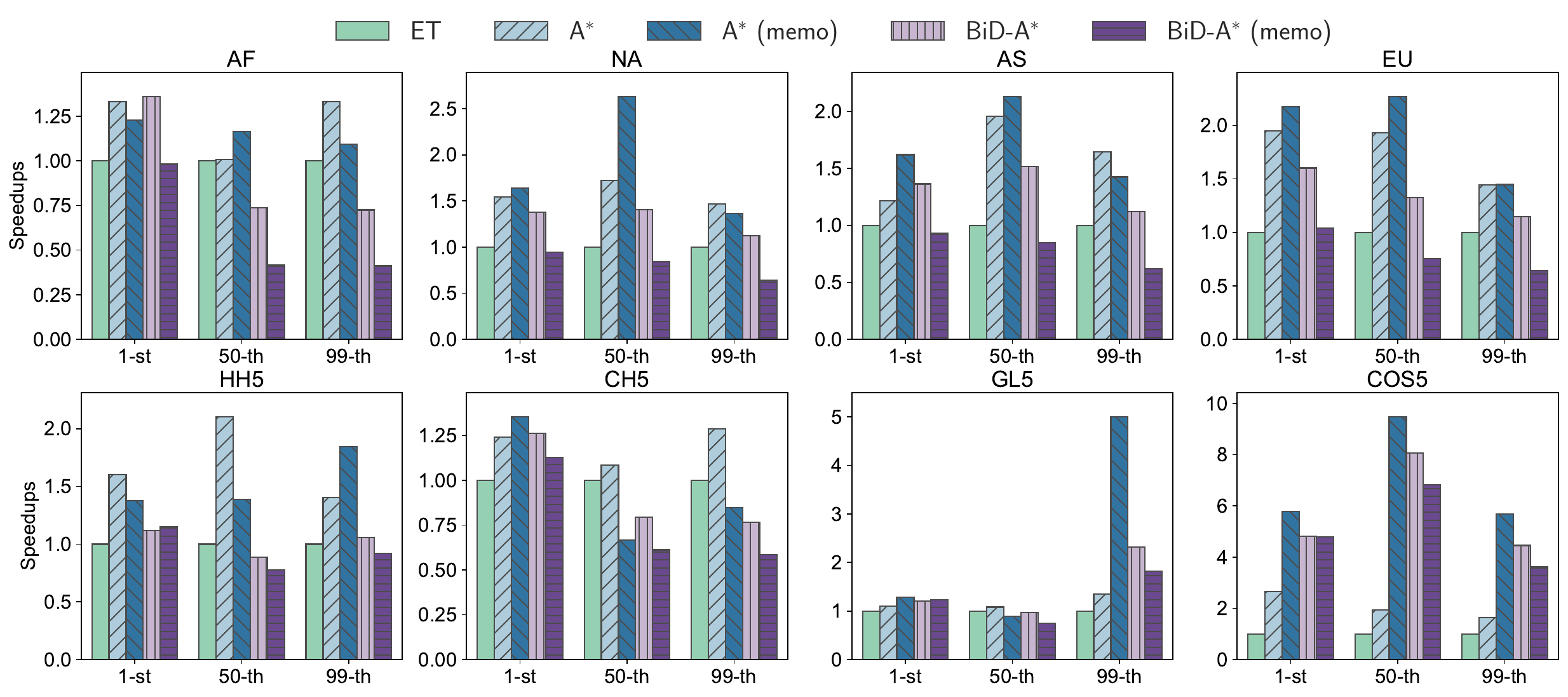}
  \caption{\textbf{Relative performance over ET with and without memoization on each graph.} 
  Numbers are running times normalized to ET. 
  ET is always one.
  Higher is better.
  \label{fig:memoization-full}}
\end{figure*}

}

\end{document}
\endinput